\documentclass[12pt]{article}

\usepackage{url}

\usepackage{subcaption}

\usepackage{graphicx}
\usepackage{multirow}

\usepackage{amssymb,amsmath,amsthm,amsfonts,pifont} 

\usepackage[inline]{enumitem}

\usepackage{tikz}
\usetikzlibrary{arrows,shapes,automata,petri,positioning}

\tikzset{
	place/.style={
		circle,
		thick,
		draw=blue!75,
		fill=blue!20,
		minimum size=6mm,
	},
	transitionH/.style={
		rectangle,
		thick,
		fill=black,
		minimum width=8mm,
		inner ysep=2pt
	},
	transitionV/.style={
		rectangle,
		thick,
		fill=black,
		minimum height=8mm,
		inner xsep=2pt
	}
}

\usepackage{forest}
\usepackage{tikz-qtree}

\newcommand{\dng}[1]{{\sim}{#1}}

\newcommand{\dg}[1]{\text{dg}({#1})}
\newcommand{\pdg}[1]{\text{dg}^+({#1})}
\newcommand{\ig}[1]{\text{ig}({#1})}
\newcommand{\fix}[1]{\text{fix}({#1})}
\newcommand{\atom}[1]{\text{atom}({#1})}
\newcommand{\head}[1]{\text{h}({#1})}
\newcommand{\body}[1]{\text{b}({#1})}
\newcommand{\bodyf}[1]{\text{bf}({#1})}
\newcommand{\pbody}[1]{\text{b}^+({#1})}
\newcommand{\nbody}[1]{\text{b}^-({#1})}
\newcommand{\sm}[1]{\text{sm}({#1})}
\newcommand{\supp}[1]{\text{supp}({#1})}
\newcommand{\cf}[1]{\text{cf}({#1})}
\newcommand{\var}[1]{\text{var}_{{#1}}}

\newtheorem{theorem}{Theorem}
\newtheorem{lemma}[theorem]{Lemma}

\newtheorem{proposition}[theorem]{Proposition}

\newtheorem{definition}{Definition}
\newtheorem{example}{Example}

\usepackage[margin=1in]{geometry}

\begin{document}

\title{\bf Static Analysis of Logic Programs via Boolean Networks}

\author{Van-Giang Trinh\thanks{LIRICA team, LIS, Aix-Marseille University, Marseille, France; email: \texttt{trinh.van-giang@lis-lab.fr}}
  \and
  Belaid Benhamou\thanks{LIRICA team, LIS, Aix-Marseille University, Marseille, France; email: \texttt{belaid.benhamou@lis-lab.fr}}
  }

\maketitle

\begin{abstract}
Answer Set Programming (ASP) is a declarative problem solving paradigm that can be used to encode a combinatorial problem as a logic program whose stable models correspond to the solutions of the considered problem.
ASP has been widely applied to various domains in AI and beyond.
The question ``What can be said about stable models of a logic program from its static information?'' has been investigated and proved useful in many circumstances.
In this work, we dive into this direction more deeply by making the connection between a logic program and a Boolean network, which is a prominent modeling framework with applications to various areas.
The proposed connection can bring the existing results in the rich history on static analysis of Boolean networks to explore and prove more theoretical results on ASP, making it become a unified and powerful tool to further study the static analysis of ASP.
In particular, the newly obtained insights have the potential to benefit many problems in the field of ASP.
\end{abstract}

\section{Introduction}
\label{sec:Introduction}

Answer Set Programming (ASP) has emerged as a powerful declarative programming paradigm for solving complex combinatorial problems under the umbrella of logic programming and non-monotonic reasoning~\cite{gelfond1988stable}.
Its basic idea is to describe the specifications of a problem by means of a logic program: solutions of this problem will be represented by the stable models of the logic program~\cite{eiter2009answer}.
Then the logic program can be fed into a software system called an ASP solver such as DLV, Cmodels, Smodels, or clasp~\cite{DBLP:journals/aicom/GebserKKOSS11}, which can compute some or all stable models of this program.
With the constant improvements in both theory (e.g., expressive powers) and practice (e.g., efficient ASP solvers), ASP has been successfully applied to a wide range of areas in AI (e.g., planning, knowledge representation and reasoning, semantic web, natural language processing, and argumentation) and beyond (e.g.,  systems biology, computational biology, bounded model checking, software engineering, robotics, and manufacturing)~\cite{erdem2016applications}.

In general, the problem of deciding if a given ground normal logic program has some stable model is NP-complete~\cite{DBLP:journals/jacm/MarekT91}.
It is thus natural to consider the static analysis of ASP, i.e., answering the question: \emph{What can be said about stable models of a logic program from its static information?}
Herein, a Logic Program (LP) is considered as a ground normal logic program~\cite{gelfond1988stable} and its static information is mostly considered as graph representations of the LP such as (positive) dependence graphs~\cite{cois1994consistency}, cycle and extended dependence graphs~\cite{costantini2006existence,DBLP:conf/gkr/CostantiniP11}, rule graphs~\cite{DBLP:journals/tcs/DimopoulosT96}, and block graphs~\cite{DBLP:conf/ijcai/Linke01}.
Historically, the first studies of this research direction focused on the existence of a unique stable model in dependence graph-based classes of LPs including positive LPs~\cite{gelfond1988stable}, acyclic LPs~\cite{apt1991acyclic}, and locally stratified LPs~\cite{gelfond1988stable}.
In 1994, Fages proved the most important result about the coincidence between stable models and models of the Clark's completion (equivalently supported models) in tight LPs~\cite{cois1994consistency}.
Being finer-represented than dependence graphs, cycle and extended dependence graphs were introduced and several improved results were obtained~\cite{costantini2006existence,DBLP:conf/gkr/CostantiniP11}.
However, there is quite limited progress regarding this direction.
Moreover, it is noted that the cycle graph or the extended dependence graph of an LP can have exponentially many vertices as the LP can have an exponential number of cycles~\cite{costantini2006existence} or rules~\cite{DBLP:conf/gkr/CostantiniP11}, respectively.
Recently, we have been only aware of the work~\cite{FANDINNO_LIFSCHITZ_2023}, which presents new understanding of the positive dependence graph to possibly strengthen some existing theoretical results.
In summary, while the static analysis of ASP is important and has a wide range of useful applications in both theory and practice of ASP~\cite{cois1994consistency,costantini2006existence,DBLP:conf/kr/EiterHK21}, there is quite limited progress on this research topic.

Boolean Networks (BNs) are a simple yet efficient mathematical formalism that has been widely applied to many areas from science to engineering~\cite{schwab2020concepts,Paulev2020}: mathematics, systems biology, computer science, theoretical physics, robotics, social modeling, neural networks, etc.
A BN includes \(n\) Boolean \emph{variables} associated with \(n\) corresponding Boolean \emph{functions} that determine the value update of a variable over time.
In this sense, a BN can be seen as a discrete dynamical system~\cite{thomas1973boolean,thomas1990biological}.
Originated by the early work~\cite{thomas1990biological}, the static analysis of BNs (i.e., studying relations between the dynamics of a BN and its influence graph) has a rich history of research~\cite{richard2019positive}.
In particular, most results focus on exploring relations between fixed points and the influence graph of a BN, where a fixed point is a state once entered, the BN's dynamics cannot escape from it~\cite{aracena2008maximum}.
To date, this research direction is still growing with many prominent and deep results~\cite{richard2019positive,richard2018fixed,richard2023attractor}.

ASP has found widespread applications in both the modeling and analysis of BNs.
It has been notably used for important tasks on BNs (all are computational ones) such as fixed point enumeration~\cite{Paulev2020}, enumeration and approximation of attractors~\cite{Paulev2020,khaled2023using}, inference of BNs from biological data~\cite{Ribeiro2021}, model repair~\cite{DBLP:journals/tplp/GebserSTV11}, and reprogramming~\cite{DBLP:journals/tplp/KaminskiSSV13}.
In the opposite direction, there is however no significant work trying to use BNs for studying the static analysis of ASP.
Indeed, the initial connection between ASP and BNs can be traced back to the theoretical work by~\cite{inoue2011logic}.
In this work, the authors used LPs to model BNs, and vice versa.
However, their investigations were limited to the conceptual level, and no subsequent studies have delved deeper into this connection so far.

In this work, we try to explore in depth the static analysis of ASP.
Herein, static information of an LP is considered as its Dependence Graph (DG)~\cite{cois1994consistency}.
The DG can be efficiently (syntactically) computed from the LP itself and might be in general more compact than other graph representations (e.g., cycle or extended dependence graphs~\cite{costantini2006existence,DBLP:conf/gkr/CostantiniP11}), but can tell many significant insights about stable models of the LP as we shall show in this paper.
Our main contribution is to consolidate the initial connection between LPs and BNs proposed in~\cite{inoue2011logic}, and in particular to use this connection for the first time to explore and prove several relations between stable models of an LP and its DG.
More specifically, we obtain several new theoretical results on how positive and negative cycles on the DG affect the number of stable models of the LP.
Furthermore, we also demonstrate several potential applications of the newly obtained results to other important problems in the field of ASP.
All the above elements show that the approach proposed in the present paper can bring the plenty of results in the field of BNs (and the field of discrete dynamical systems in general) to the field of ASP as well as provide a unified framework for exploring and proving more new theoretical results in ASP.

In the rest of this paper, Section~\ref{sec:Preliminaries} recalls preliminaries on logic programs and Boolean networks. 
Section~\ref{sec:Main-Results} presents the connection between logic programs and Boolean networks along with the relations between stable models and positive and/or negative cycles.
Section~\ref{sec:Discussion} presents discussions on potential applications and extensions of the new theoretical results.
Section~\ref{sec:Conclusion} concludes the paper and draws future work.

\section{Preliminaries}\label{sec:Preliminaries}

In this paper, we consider only ground normal logic programs.
Unless specifically stated, a Logic Program (LP) means a ground normal logic program.
\(\mathbb{B} = \{0, 1\}\) is the Boolean domain, and all Boolean operators used in this paper include \(\land\) (conjunction), \(\lor\) (disjunction), \(\neg\) (negation), and \(\leftrightarrow\) (bi-implication).

An LP \(P\) is a finite set of rules of the form \(p \gets p_1 \land \dots \land p_m \land \dng{p_{m + 1}} \land \dots \land \dng{p_{k}}\) where \(p\) and \(p_i\) are variable-free atoms (\(k \geq m \geq 0\)), \(\sim\) denotes the negation as failure.
We use \(\atom{P}\) to denote the set of all atoms of \(P\).
For any rule \(r\) of this form, \(p\) is called the \emph{head} of \(r\) (denoted by \(\head{r}\)), \(\pbody{r} = \{p_1, \dots, p_m\}\) is called the \emph{positive body} of \(r\), \(\nbody{r} = \{p_{m + 1}, \dots, p_{k}\}\) is called the \emph{negative body} of \(r\), \(\body{r} = \pbody{r} \cup \nbody{r}\) is the \emph{body} of \(r\), and \(\bodyf{r} = p_1 \land \dots \land p_m \land \neg p_{m + 1} \land \dots \land \neg p_{k}\) is the \emph{body formula} of \(r\).
If \(\body{r} = \emptyset\), then \(r\) is called a \emph{fact} and \(\bodyf{r}\) is conventionally 1.
An \emph{Herbrand interpretation} \(I\) is a subset of \(\atom{P}\), and is called an \emph{Herbrand model} if for any rule \(r\) in \(P\), \(\pbody{r} \subseteq I\) and \(\nbody{r} \cap I = \emptyset\) imply \(\head{r} \in I\).
\(P\) is \emph{positive} if \(\nbody{r} = \emptyset\) for all \(r \in P\).
In this case, \(P\) has a unique least Herbrand model.
An Herbrand interpretation \(I\) is a \emph{stable model} of \(P\) if \(I\) is the least Herbrand model of the reduct of \(P\) with respect to \(I\) (denoted by \(P^I\)) where \(P^I = \{\head{r} \leftarrow \bigwedge_{p \in \pbody{r}} p \vert r \in P, \nbody{r} \cap I = \emptyset\}\).
We use \(\sm{P}\) to denote the set of all stable models of \(P\).

The Clark's completion of an LP \(P\) (denoted by \(\cf{P}\)) consists of the following sentences: for each \(p \in \atom{P}\), let \(r_1, \dots, r_l\) be all the rules of \(P\) having the same head \(p\), then \(p \leftrightarrow \bodyf{r_1} \lor \dots \lor \bodyf{r_l}\) is in \(\cf{P}\). If \(l = 0\), then the equivalence is \(p \leftrightarrow 0\).
In this paper, we shall identify a truth assignment with the set of atoms true in this assignment, and conversely, identify a set of atoms with the truth assignment that assigns an atom true iff it is in the set.
Under this convention, a model of \(\cf{P}\) is also a Herbrand model of \(P\).
An Herbrand model \(A\) is a \emph{supported model} if for any atom \(p \in A\), there is a rule \(r \in P\) such that \(\head{r} = p\), \(\pbody{r} \subseteq A\), and \(\nbody{r} \cap A = \emptyset\).
We use \(\supp{P}\) to denote the set of all supported models of \(P\).
It is well-known that supported models of \(P\) coincide with models of its Clark's completion~\cite{apt1988towards}.
In particular, a stable model is also a supported model, but the converse is not true in general.

We define the Dependence Graph (DG) of \(P\) (denoted by \(\dg{P}\)) as a signed directed graph \((V, E)\) on the set of signs \(\{\oplus, \ominus\}\) where \(V = \atom{P}\) and \((uv, \oplus) \in E\) (resp. \((uv, \ominus) \in E\)) iff there is a rule \(r \in P\) such that \(v = \head{r}\) and \(u \in \pbody{r}\) (resp. \(u \in \nbody{r}\)).
An arc \((uv, \oplus)\) is positive, whereas an arc \((uv, \ominus)\) is negative.
The positive DG of \(P\) (denoted by \(\pdg{P}\)) is a subgraph of \(\dg{P}\) that has the same set of vertices but contains only positive arcs.

A Boolean Network (BN) \(f\) is a set of Boolean functions on a set of Boolean variables denoted by \(\var{f}\).
Each variable \(v\) is associated with a Boolean function \(f_v \colon \mathbb{B}^{|\var{f}|} \rightarrow \mathbb{B}\).
\(f_v\) is called \emph{constant} if it is always either 0 or 1 regardless of its arguments.
A state \(s\) of \(f\) is a mapping \(s \colon \var{f} \mapsto \mathbb{B}\) that assigns either 0 (inactive) or 1 (active) to each variable.
We can write \(s_v\) instead of \(s(v)\) for short.
At each time step, all variables are updated simultaneously, i.e., \(s'_v = f_v(s), \forall v \in \var{f}\), where \(s\) (resp. \(s'\)) is the state of \(f\) at time \(t\) (resp. \(t + 1\)).
\(s'\) is called the successor state of \(s\).
By abuse of notation, we can consider a state \(x\) as a subset of \(\var{f}\) where \(v \in x\) if and only if \(x_v = 1\).
\(x\) is said to be a \emph{fixed point} of \(f\) if \(f_v(x) = x_v, \forall v \in \var{f}\).
We use \(\fix{f}\) to denote the set of all fixed points of \(f\).

Let \(G = (V, E)\) be a signed directed graph on the set of signs \(\{\oplus, \ominus\}\).
The \emph{in-degree} of a node \(v\) is the number of vertices having directed arcs to \(v\).
The minimum in-degree of \(G\) is the smallest in-degree of its vertices.
A cycle \(C\) of \(G\) is positive (resp. negative) if it contains an even (resp. odd) number of negative arcs.
\(G\) is \emph{sign-definite} if there cannot be two arcs with different signs between two different nodes.
\(G\) is \emph{strongly connected} if there is always a path between two any vertices of \(G\).
Let \(x\) be a state of \(f\).
We use \(x[v \leftarrow a]\) to denote the state \(y\) so that \(y_v = a\) and \(y_u = x_u, \forall u \in \var{f}, u \neq v\) where \(a \in \mathbb{B}\).
We define the Influence Graph (IG) of \(f\) (denoted by \(\ig{f} \)) as a signed directed graph \((V, E)\) on the set of signs \(\{\oplus, \ominus\}\) where \(V = \var{f}\), \((uv, \oplus) \in E\) iff there is a state \(x\) such that \(f_v(x[u \leftarrow 0]) < f_v(x[u \leftarrow 1])\), and \((uv, \ominus) \in E\) iff there is a state \(x\) such that \(f_v(x[u \leftarrow 0]) > f_v(x[u \leftarrow 1])\).

\section{Main Results}
\label{sec:Main-Results}

For clarification, we make a summary of the results that shall be presented in this section.
First, the known results with formal proofs include Theorem~\ref{theo:fages}, Proposition~\ref{proposition:acyclic-program}, and Theorem~\ref{theorem:locally-stratified-lp}.
Second, the known results in the ASP folklore without formal proofs include Theorems~\ref{theorem:no-positive-cycle} and~\ref{theorem:neg-model-existence}.
Third, the completely new results include Theorems~\ref{theorem:ig_dg} and~\ref{theorem:supported-fixed-point}, Propositions~\ref{proposition:positive-cycle} and~\ref{proposition:negative-cycle}, Lemmas~\ref{lemma:distinct-stable-model-positive-cycle} and~\ref{lemma:sign-definite-graph-scc}, Proposition~\ref{proposition:ubound-pfvs}, Lemma~\ref{lemma:fixpoint-neg}, and Theorem~\ref{theorem:one-scc-two-stable-models}.

\subsection{Logic Programs and Boolean Networks}

First of all, we recall Fages' theorem (see Theorem~\ref{theo:fages}).

\begin{theorem}[\cite{cois1994consistency}]
	Let \(P\) be an LP.
	If \(\pdg{P}\) has no cycle, then the set of stable models of \(P\) coincides with the set of supported models of \(P\).
	\label{theo:fages}
\end{theorem}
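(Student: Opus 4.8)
The plan is to prove the two inclusions separately. The inclusion $\sm{P} \subseteq \supp{P}$ holds for every LP and has already been recalled in the preliminaries, so all the real work goes into the reverse inclusion $\supp{P} \subseteq \sm{P}$ under the hypothesis that $\pdg{P}$ is acyclic. I would fix an arbitrary supported model $A$ of $P$ and aim to show that $A$ is the least Herbrand model of the reduct $P^A$, which is exactly the definition of being a stable model.

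First I would check two easy facts that do not use the hypothesis. (i) $A$ is a model of $P^A$: a reduct rule $\head{r} \gets \bigwedge_{p \in \pbody{r}} p$ survives in $P^A$ only when $\nbody{r} \cap A = \emptyset$, and in that case, if $\pbody{r} \subseteq A$, then the fact that $A$ is an Herbrand model of $P$ forces $\head{r} \in A$; so $A$ satisfies every rule of $P^A$. (ii) Since $P^A$ is positive it has a least Herbrand model $M$, and because $A$ is one of its models we get $M \subseteq A$ at once. It remains to prove the converse inclusion $A \subseteq M$, which is the only place the acyclicity hypothesis enters.

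Here I would exploit that $\pdg{P}$ has no cycle to fix a level mapping $\ell \colon \atom{P} \to \mathbb{N}$ that is strictly compatible with the positive arcs, i.e.\ $\ell(u) < \ell(v)$ whenever $(uv, \oplus)$ is an arc of $\pdg{P}$; such a mapping exists precisely because an acyclic digraph admits a topological order, concretely one may take $\ell(v)$ to be the length of the longest directed path in $\pdg{P}$ ending at $v$. Intuitively, each atom then positively depends only on strictly lower atoms. I would then argue by minimal counterexample: suppose $A \setminus M \neq \emptyset$ and pick $p \in A \setminus M$ of minimal level. Since $A$ is supported and $p \in A$, there is a rule $r$ with $\head{r} = p$, $\pbody{r} \subseteq A$, and $\nbody{r} \cap A = \emptyset$, so the reduct rule $p \gets \bigwedge_{q \in \pbody{r}} q$ lies in $P^A$. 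For every $q \in \pbody{r}$ the arc $(qp, \oplus)$ belongs to $\pdg{P}$, hence $\ell(q) < \ell(p)$; by minimality of $p$ together with $q \in A$ this forces $q \in M$, so $\pbody{r} \subseteq M$. As $M$ is a model of $P^A$, the surviving rule then yields $p \in M$, contradicting $p \notin M$.

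This contradiction gives $A \subseteq M$, hence $A = M$, so $A$ is the least Herbrand model of $P^A$ and therefore a stable model, completing the inclusion $\supp{P} \subseteq \sm{P}$. I expect the main obstacle to be the structural step of the third paragraph: extracting the well-founded level mapping from acyclicity of $\pdg{P}$ and observing that the positive body of the minimal-level atom sits entirely at strictly lower levels. Once that fact is isolated, the minimal-counterexample induction closes the gap, and everything else reduces to routine bookkeeping about the reduct and the least-model operator.
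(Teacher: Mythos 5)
Your proof is correct. Note that the paper itself offers no proof of this statement: it is quoted as Fages' theorem with a citation to the original reference, so there is nothing internal to compare against. Your argument is the standard one for this result: the inclusion \(\sm{P} \subseteq \supp{P}\) is generic, and for the converse you correctly reduce the claim to showing that a supported model \(A\) equals the least model \(M\) of \(P^A\), obtaining \(M \subseteq A\) for free and closing the gap \(A \subseteq M\) by well-founded induction on a level mapping extracted from the acyclicity of \(\pdg{P}\). The key step is sound with the paper's arc convention: for \(q \in \pbody{r}\) and \(p = \head{r}\) the arc is \((qp, \oplus)\), so \(\ell(q) < \ell(p)\) under a topological order, and minimality of \(p\) in \(A \setminus M\) forces \(\pbody{r} \subseteq M\), whence \(p \in M\) and a contradiction. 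Finiteness of \(\atom{P}\) (LPs are finite sets of rules here) guarantees the level mapping exists, and the empty-positive-body case is handled vacuously. This is a complete and correct proof of the cited theorem.
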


Next, we define the BN encoding for LPs (see Definition~\ref{definition:BN-encoding}).

\begin{definition}
	\label{definition:BN-encoding}
	Let \(P\) be an LP. We define a BN \(f\) encoding \(P\) as: \(\var{f} = \atom{P}\) and \(f_v = \bigvee_{r \in P, v = \head{r}}\bodyf{r}, \forall v \in \text{var}_f\).
	Conventionally, if \(v\) does not appear in the head of any rule then \(f_v = 0\).
\end{definition}

The first connection between an LP and its BN encoding is about the relation between the DG and the IG (see Theorem~\ref{theorem:ig_dg}).
This relation is very important because all the following results rely on it and the DG or the positive DG of \(P\) can be efficiently built based on the syntax only, whereas the construction of the IG of \(f\) may be exponential in general.
Note however that the IG of \(f\) is usually built by using Binary Decision Diagrams (BDDs)~\cite{richard2019positive}.
In this case,  the IG can be efficiently obtained because each Boolean function \(f_v\) is already in Disjunctive Normal Form (DNF), thus the BDD of this function would be not too large.

\begin{theorem}
	Let \(P\) be an LP and \(f\) be its BN encoding.
	Then \(\ig{f} \subseteq \dg{P}\).
	\label{theorem:ig_dg}
\end{theorem}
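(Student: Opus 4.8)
The plan is to prove the stronger, sign-respecting inclusion directly: every arc of \(\ig{f}\), together with its sign, is also an arc of \(\dg{P}\). By the symmetry between the two signs it suffices to treat a positive arc, so suppose \((uv,\oplus)\in\ig{f}\); the negative case is identical with the inequalities reversed. First I would unfold the definition of the influence graph: there is a state \(x\) with \(f_v(x[u\leftarrow 0]) < f_v(x[u\leftarrow 1])\), which over \(\mathbb{B}\) forces \(f_v(x[u\leftarrow 0]) = 0\) and \(f_v(x[u\leftarrow 1]) = 1\). The goal is then to exhibit a rule \(r\in P\) with \(v=\head{r}\) and \(u\in\pbody{r}\), which by definition of \(\dg{P}\) immediately gives \((uv,\oplus)\in\dg{P}\).

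The key step is to exploit the disjunctive shape \(f_v = \bigvee_{r\in P,\, v=\head{r}} \bodyf{r}\). From \(f_v(x[u\leftarrow 1]) = 1\) I obtain a rule \(r\) with head \(v\) whose body formula satisfies \(\bodyf{r}(x[u\leftarrow 1]) = 1\); and from \(f_v(x[u\leftarrow 0]) = 0\), since a disjunction vanishes only when all of its disjuncts do, every body formula with head \(v\)---in particular this same \(r\)---satisfies \(\bodyf{r}(x[u\leftarrow 0]) = 0\). Thus the single conjunction \(\bodyf{r}\) strictly increases when \(u\) is flipped from \(0\) to \(1\), the remaining coordinates being held fixed at \(x\).

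Finally I would run the elementary monotonicity argument on the conjunction of literals \(\bodyf{r} = \bigwedge_{p\in\pbody{r}} p \land \bigwedge_{q\in\nbody{r}} \neg q\). Since \(x[u\leftarrow 0]\) and \(x[u\leftarrow 1]\) agree off \(u\), any change in the truth value of \(\bodyf{r}\) must come from a literal built on \(u\). If \(u\) occurred only negatively (i.e.\ \(u\in\nbody{r}\setminus\pbody{r}\)), the literal \(\neg u\) would be false at \(x[u\leftarrow 1]\), contradicting \(\bodyf{r}(x[u\leftarrow 1]) = 1\); and if \(u\) occurred in neither body, \(\bodyf{r}\) would be insensitive to \(u\), contradicting the strict increase. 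Hence \(u\in\pbody{r}\), giving the required positive arc, and the negative case symmetrically yields \(u\in\nbody{r}\). I expect the only genuine subtlety to be the aggregation across disjuncts in the middle step: one must not merely find some disjunct true at \(x[u\leftarrow 1]\), but argue that this very disjunct is false at \(x[u\leftarrow 0]\)---and this is exactly what the hypothesis \(f_v(x[u\leftarrow 0]) = 0\) supplies by forcing all disjuncts to vanish simultaneously.
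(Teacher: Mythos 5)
Your proof is correct and rests on the same observation as the paper's: the sign of an influence of \(u\) on \(f_v\) is constrained by the sign with which \(u\) occurs in the DNF \(f_v=\bigvee_{r}\bodyf{r}\). The only difference is organizational — the paper argues by contradiction (if \(u\) occurs only negatively or not at all, every disjunct and hence \(f_v\) is non-increasing in \(u\)), whereas you argue directly and, via the step isolating a single disjunct that flips from \(0\) to \(1\), additionally exhibit the witnessing rule \(r\) with \(u\in\pbody{r}\).
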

\begin{proof}
	We have that \(\ig{f}\) and \(\dg{P}\) have the same set of vertices that is the set of ground atoms of \(P\).
	Let \((uv, \epsilon)\) be an arc in \(\ig{f}\).
	We show that \((uv, \epsilon)\) is also an arc in \(\dg{P}\).
	Without loss of generality, suppose that \(\epsilon = \oplus\).
	
	Assume that \((uv, \oplus)\) is not an arc in \(\dg{P}\).
	There are two cases.
	Case 1: there is no arc from \(u\) to \(v\) in \(\dg{P}\).
	In this case, both \(u\) and \(\neg u\) clearly do not appear in \(f_v\).
	This implies that \(\ig{f}\) has no arc from \(u\) to \(v\), which is a contradiction.
	Case 2: there is only a negative arc from \(u\) to \(v\) in \(\dg{P}\).
	It follows that \(\neg u\) appears in \(f_v\) but \(u\) does not because \(f_v\) is in DNF.
	Then, for any state \(x\) and for every conjunction \(c\) of \(f_v\), we have that \(c(x[u \leftarrow 0]) \geq c(x[u \leftarrow 1])\).
	This implies that \(f_v(x[u \leftarrow 0]) \geq f_v(x[u \leftarrow 1])\) for any state \(x\).
	Since \((uv, \oplus)\) is an arc in \(\ig{f}\), there is a state \(x\) such that \(f_v(x[u \leftarrow 0]) < f_v(x[u \leftarrow 1])\).
	This leads to a contradiction.
	Hence, \((uv, \oplus)\) is an arc in \(\dg{P}\).
	
	We can conclude that \(\ig{f} \subseteq \dg{P}\) (i.e., \(\ig{f}\) is a subgraph of \(\dg{P}\)).
\end{proof}

\begin{theorem}
	\label{theorem:supported-fixed-point}
	Let \(P\) be an LP and \(f\) be its BN encoding.
	Then the set of supported models of \(P\) coincides with the set of fixed points of \(f\), i.e., \(\supp{P} = \fix{f}\).
\end{theorem}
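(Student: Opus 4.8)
The plan is to reduce the statement to the characterization of supported models already recalled in the preliminaries, and then to observe that Clark's completion and the Boolean network encoding are, atom by atom, the very same object written in two notations. First I would invoke the stated fact that \(\supp{P}\) coincides with the set of models of \(\cf{P}\) (the result attributed to~\cite{apt1988towards}); this lets me discard the combinatorial definition of supported model and argue purely about the completion. Then, for each atom \(p \in \atom{P}\), letting \(r_1, \dots, r_l\) be all rules of \(P\) with head \(p\), the completion sentence for \(p\) is \(p \leftrightarrow \bodyf{r_1} \lor \dots \lor \bodyf{r_l}\), while by Definition~\ref{definition:BN-encoding} the associated Boolean function is \(f_p = \bodyf{r_1} \lor \dots \lor \bodyf{r_l}\). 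Hence the completion sentence for \(p\) is literally \(p \leftrightarrow f_p\). I would also check that the degenerate case \(l = 0\) is consistent: the completion then stipulates \(p \leftrightarrow 0\), matching the convention \(f_p = 0\) adopted in Definition~\ref{definition:BN-encoding}.

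Next I would exploit the paper's convention identifying a truth assignment with the set of atoms it makes true, together with the fact that \(\var{f} = \atom{P}\), so that a state of \(f\) is exactly a Herbrand interpretation of \(P\). Under this identification, a set \(A\) is a model of \(\cf{P}\) iff \(A\) satisfies every sentence \(p \leftrightarrow f_p\), i.e.\ iff for each \(p\) we have \(p \in A\) exactly when \(f_p(A) = 1\), equivalently \(A_p = f_p(A)\). But this is precisely the fixed-point condition \(f_v(A) = A_v\) for all \(v \in \var{f}\). Chaining the two equivalences gives \(A \in \supp{P}\) iff \(A\) is a model of \(\cf{P}\) iff \(A \in \fix{f}\), hence \(\supp{P} = \fix{f}\).

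The main obstacle—really the only point requiring care—is making the bridge between the two semantic worlds airtight: that the set of states of \(f\) coincides with the set of Herbrand interpretations of \(P\), that evaluating the Boolean function \(f_p\) at a state \(A\) agrees with the propositional truth value of \(\bodyf{r_1} \lor \dots \lor \bodyf{r_l}\) under the assignment \(A\), and that the empty-head convention (\(f_p = 0\)) and the empty-body convention (\(\bodyf{r} = 1\) for a fact) line up on both sides. Once these bookkeeping points are pinned down, the equivalence is immediate, and no further combinatorial argument is needed.
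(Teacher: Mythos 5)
Your proposal is correct and follows essentially the same route as the paper's proof: both reduce the claim to the known coincidence of supported models with models of \(\cf{P}\) (citing the same source) and then observe that, atom by atom, the completion sentence \(p \leftrightarrow \bodyf{r_1} \lor \dots \lor \bodyf{r_l}\) is the fixed-point condition \(A_p = f_p(A)\) under the stated identification of interpretations with states. Your version merely spells out the bookkeeping (the \(l = 0\) case and the identification of states with Herbrand interpretations) that the paper's two-line proof leaves implicit.
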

\begin{proof}
	By the definition of the BN encoding and the fixed point characterization, \(\fix{f}\) is identical to the set of models of \(\text{cf}(P)\).
	The set of models of \(\text{cf}(P)\) is exactly the set of supported models of \(P\) (i.e., \(\supp{P}\))~\cite{apt1988towards}.
	Hence, \(\supp{P} = \fix{f}\).
\end{proof}

Theorem~\ref{theorem:supported-fixed-point} shows that for an LP \(P\), \(\supp{P} = \fix{f}\) where \(f\) be the encoded BN of \(P\).
It suggests that the set of fixed points of \(f\) is an upper bound for the set of stable models of \(P\).
See Example~\ref{example:stable-supported} for illustration.
If \(\pdg{P}\) has no cycle, then the two sets are the same by Fages' theorem.
In this case, all existing (both theoretical and practical) results on fixed points of BNs can be directly applied to stable models of LPs.
However, such a class of LPs is only a small piece of all possible LPs.
Hence, we shall exploit Theorem~\ref{theorem:ig_dg}, Theorem~\ref{theorem:supported-fixed-point}, and the existing results on fixed points of BNs to explore new results on stable models of LPs in general.

\begin{example} Consider an LP \(P\): \(a \leftarrow b; a \leftarrow \dng{b}; b \leftarrow c; c \leftarrow b\).
	We use ';' to separate program rules.
	The encoded BN \(f\) of \(P\) is: \(f_a = b \lor \neg b = 1, f_b = c, f_c = b\).
	Figures~\ref{fig:dg-ig-example}(a) and~\ref{fig:dg-ig-example}(b) show the DG of \(P\) and the IG of \(f\), respectively.
	We can see that \(\ig{f} \subset \dg{P}\).
	\(f\) has two fixed points: 100 and 111.
	However, only 100 is the unique stable model of \(P\) (i.e., \(\{a\}\)).
	\label{example:stable-supported}
\end{example}

\begin{figure}[ht]
	\centering
	\begin{subfigure}[b]{0.45\textwidth}
		\centering
		\begin{tikzpicture}[node distance=1cm and 1cm, every node/.style={scale=1.0}]
		\node[circle, draw] (a) [] {$a$};
		\node[circle, draw] (b) [right=of a, xshift=0cm] {$b$};
		\node[circle, draw] (c) [right=of b, xshift=0cm] {$c$};
		
		\draw[->] (b) edge [bend right=25] node [midway, above, fill=white] {$\oplus$} (a);
		\draw[->] (b) edge [bend left=25] node [midway, above, fill=white] {$\ominus$} (a);
		
		\draw[->] (c) edge [bend left=25] node [midway, above, fill=white] {$\oplus$} (b);
		\draw[->] (b) edge [bend left=25] node [midway, above, fill=white] {$\oplus$} (c);
		\end{tikzpicture}
		\caption{}
	\end{subfigure}%
	\begin{subfigure}[b]{0.45\textwidth}
		\centering
		\begin{tikzpicture}[node distance=1cm and 1cm, every node/.style={scale=1.0}]
		\node[circle, draw] (a) [] {$a$};
		\node[circle, draw] (b) [right=of a, xshift=0cm] {$b$};
		\node[circle, draw] (c) [right=of b, xshift=0cm] {$c$};
		
		\draw[->] (c) edge [bend left=25] node [midway, above, fill=white] {$\oplus$} (b);
		\draw[->] (b) edge [bend left=25] node [midway, above, fill=white] {$\oplus$} (c);
		\end{tikzpicture}
		\caption{}
	\end{subfigure}%
	\caption{(a) \(\dg{P}\). (b) \(\ig{f}\).}
	\label{fig:dg-ig-example}
\end{figure}

\subsection{Cycles and Stable Models}

We start with revisiting one well-known result in ASP (see Proposition~\ref{proposition:acyclic-program}).
We provide a new proof for it, which relies on the relation between an LP and a BN.

\begin{proposition}
	Let \(P\) be an LP.
	If \(\dg{P}\) has no cycle, then \(P\) has exactly one stable model.
	\label{proposition:acyclic-program}
\end{proposition}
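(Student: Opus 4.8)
The plan is to reduce the statement about stable models entirely to a statement about fixed points of the encoded BN \(f\), and then to appeal to the classical fact (Robert's theorem) that a Boolean network whose influence graph is acyclic has a unique fixed point. First I would note that \(\pdg{P}\) is by definition a subgraph of \(\dg{P}\), so if \(\dg{P}\) has no cycle then neither does \(\pdg{P}\). Fages' theorem (Theorem~\ref{theo:fages}) then gives \(\sm{P} = \supp{P}\), and Theorem~\ref{theorem:supported-fixed-point} gives \(\supp{P} = \fix{f}\). Combining the two, \(\sm{P} = \fix{f}\), so it suffices to prove that \(f\) has exactly one fixed point.

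Next I would transport the acyclicity hypothesis to the influence graph. By Theorem~\ref{theorem:ig_dg} we have \(\ig{f} \subseteq \dg{P}\), and a subgraph of an acyclic graph is again acyclic, so \(\ig{f}\) has no cycle (in particular, no self-loop).

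The core step is then to show that an acyclic influence graph forces a unique fixed point. I would choose a topological ordering \(v_1, \dots, v_n\) of \(\var{f}\) consistent with \(\ig{f}\), which exists because \(\ig{f}\) is acyclic. By the defining property of the influence graph, each \(f_{v_i}\) can be essential only in variables that have an arc into \(v_i\) in \(\ig{f}\), all of which precede \(v_i\) in the order; the source variable \(v_1\) thus has a constant function \(f_{v_1}\). Proceeding by induction along the order, the value any fixed point must assign to \(v_i\) is completely determined by the already-fixed values of \(v_1,\dots,v_{i-1}\). This simultaneously yields existence and uniqueness, i.e. \(|\fix{f}| = 1\), whence \(|\sm{P}| = 1\).

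The routine reductions are immediate once the earlier results are in hand; the only real content is the last step. The main obstacle there is justifying that a variable \(u\) with no arc \((uv,\cdot)\) in \(\ig{f}\) is genuinely not essential for \(f_v\), so that the induction along the topological order is well founded, together with handling the source variables (whose functions are constant) as the base case. This is exactly the defining property of \(\ig{f}\), so the argument goes through; alternatively, one may simply cite Robert's theorem on Boolean networks with acyclic interaction graphs and avoid reproving the unique-fixed-point fact.
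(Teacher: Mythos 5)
Your proposal is correct and follows essentially the same route as the paper: pass to the encoded BN, use Fages' theorem together with Theorem~\ref{theorem:supported-fixed-point} to identify stable models with fixed points, transfer acyclicity to \(\ig{f}\) via Theorem~\ref{theorem:ig_dg}, and conclude uniqueness of the fixed point. The only difference is that the paper cites the unique-fixed-point result for acyclic influence graphs (Robert's theorem, Theorem 1 of the cited survey) as a black box, whereas you additionally sketch its standard proof by induction along a topological order.
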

\begin{proof}
	Let \(f\) be the encoded BN of \(P\).
	Since \(\pdg{P} \subseteq \dg{P}\), \(\pdg{P}\) has no cycle, then \(\sm{P} = \supp{P}\) by Theorem~\ref{theo:fages}.
	Since \(\ig{f} \subseteq \dg{P}\), \(\ig{f}\) has no cycle.
	By Theorem 1 of~\cite{richard2019positive}, \(f\) has a unique fixed point, thus \(P\) has a unique stable model.
\end{proof}

Proposition~\ref{proposition:acyclic-program} is the consequence of Theorem 2.5(iv) of~\cite{apt1991acyclic} and Fages' theorem.
Theorem 2.5(iv) of~\cite{apt1991acyclic} shows that if \(\dg{P}\) has no cycle, then \(\cf{P}\) has a unique model, which is also the unique stable model of \(P\) by Fages' theorem.
Indeed, the BN-based proof that we provide here is quite simpler than the above proof.
We then present two newly small results considering LPs with very special structures (see Propositions~\ref{proposition:positive-cycle} and~\ref{proposition:negative-cycle}).
See Example~\ref{example:pos-neg-cycle} for illustration.

\begin{proposition}
	Let \(P\) be an LP.
	Suppose that \(\dg{P}\) is a positive cycle.
	If \(\dg{P}\) has a negative arc, then \(P\) has exactly two stable models.
	Otherwise, \(P\) has exactly one stable model.
	\label{proposition:positive-cycle}
\end{proposition}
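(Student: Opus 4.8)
The plan is to route everything through the BN encoding \(f\) of \(P\) and the identification \(\supp{P} = \fix{f}\) from Theorem~\ref{theorem:supported-fixed-point}, together with Fages' theorem (Theorem~\ref{theo:fages}). First I would record the precise shape of \(f\). Writing the cycle as \(v_0 \to v_1 \to \dots \to v_{n-1} \to v_0\), each vertex \(v_{i+1}\) (indices mod \(n\)) has in-degree \(1\) in \(\dg{P}\), so \(f_{v_{i+1}}\) depends on the single variable \(v_i\); a positive arc forces \(f_{v_{i+1}} = v_i\) and a negative arc forces \(f_{v_{i+1}} = \neg v_i\). This is where I must assume the dependence is genuine, i.e.\ that no rule collapses \(f_{v_{i+1}}\) to a constant (for instance a fact with head \(v_{i+1}\)); I would note that such degenerate rules are tacitly excluded when \(\dg{P}\) is taken to be exactly the cycle, since otherwise \(\ig{f} \subsetneq \dg{P}\) could invalidate the count below.

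Next I would count \(\fix{f}\). A state \(x\) is a fixed point iff \(x_{v_{i+1}} = \sigma_i(x_{v_i})\) for all \(i\), where \(\sigma_i\) is the identity on a positive arc and negation on a negative arc. Hence \(x\) is completely determined by \(x_{v_0}\), and the only consistency requirement is \(x_{v_0} = (\sigma_{n-1} \circ \dots \circ \sigma_0)(x_{v_0})\). The composite \(\sigma_{n-1} \circ \dots \circ \sigma_0\) is the identity precisely when the number of negative arcs is even, which is exactly the hypothesis that \(\dg{P}\) is a positive cycle. Therefore both choices \(x_{v_0} = 0\) and \(x_{v_0} = 1\) extend to fixed points, giving \(|\fix{f}| = 2\), and by Theorem~\ref{theorem:supported-fixed-point} also \(|\supp{P}| = 2\).

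Now I would split on the two cases. If \(\dg{P}\) has a negative arc, then deleting that arc from the cycle leaves a disjoint union of paths, so \(\pdg{P}\) is acyclic; Fages' theorem then yields \(\sm{P} = \supp{P}\), and the count above gives exactly two stable models. If \(\dg{P}\) has no negative arc, Fages' theorem does \emph{not} apply, since here \(\pdg{P} = \dg{P}\) is itself a cycle, so I cannot equate stable with supported models. Instead I would use that every \(f_{v_{i+1}} = v_i\) means no rule of \(P\) has a negative body, i.e.\ \(P\) is a positive program; by the preliminaries it then has a unique least Herbrand model, which is its unique stable model, so \(P\) has exactly one. Concretely, the two supported models are \(\varnothing\) and \(\atom{P}\), and only \(\varnothing\) survives as a stable model.

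The main obstacle is precisely this all-positive case: the number of supported models (equivalently fixed points) is \(2\) in both situations, so the distinction between ``two stable models'' and ``one stable model'' is invisible at the level of \(\fix{f}\). It is the failure of Fages' theorem when \(\pdg{P}\) carries the whole cycle that lets the second supported model \(\atom{P}\) be non-stable, and recovering this requires the extra input that \(P\) is positive. I would also take care to justify that the arc signs really yield non-constant functions, as flagged in the first paragraph, since otherwise the clean fixed-point count of \(2\) breaks down.
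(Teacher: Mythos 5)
Your proof follows essentially the same route as the paper's: encode \(P\) as a BN, establish that the network is a single cycle of non-constant one-variable functions, count its fixed points (two, because the cycle is positive), identify them with supported models via Theorem~\ref{theorem:supported-fixed-point}, and then split on whether Fages' theorem applies (a negative arc makes \(\pdg{P}\) acyclic) or \(P\) is positive (unique least Herbrand model). The only differences are that you derive the fixed-point count directly by propagating the arc signs around the cycle where the paper invokes a cited result, and that you explicitly flag the non-constancy assumption (e.g., the absence of a fact whose head lies on the cycle) on which the count depends --- a point worth retaining, since the paper's step ``in-degree one implies \(f_v\) is constant-free'' silently relies on exactly that assumption.
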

\begin{proof}
	Let \(f\) be the BN encoding of \(P\).
	Since \(\dg{P}\) is a positive cycle, the in-degree of each vertex is one, leading to every \(f_v\) is constant-free.
	Hence, \(\ig{f}\) is also a positive cycle.
	By~\cite{remy2003description}, \(f\) has exactly two fixed points.
	If \(\dg{P}\) has a negative arc, then \(\pdg{P}\) has no cycle.
	It follows that \(\supp{P} = \sm{P}\) by Theorem~\ref{theo:fages}.
	Hence, \(P\) has exactly two stable models.
	If \(\dg{P}\) has no negative arc, then \(P\) is positive.
	Hence, it has exactly one stable model.
\end{proof}

In the proof of Proposition~\ref{proposition:positive-cycle}, if \(P\) has exactly two stable models (say \(A\) and \(B\)), then \(A \cap B = \emptyset\) and \(A \cup B = \atom{P}\), since \(A\) and \(B\) are also the two fixed points of \(f\) where \(A_v \neq B_v, \forall v \in \var{f}\) by~\cite{remy2003description}.
Otherwise, the unique stable model of \(P\) is \(\emptyset\).

\begin{proposition}
	Let \(P\) be an LP.
	Suppose that \(\dg{P}\) is a negative cycle.
	Then \(P\) has no stable model.
	\label{proposition:negative-cycle}
\end{proposition}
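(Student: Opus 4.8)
The plan is to reduce the statement to the absence of fixed points in the encoded Boolean network, mirroring the strategy used for Proposition~\ref{proposition:positive-cycle}. Let \(f\) be the BN encoding of \(P\). Recall from the preliminaries that every stable model of \(P\) is a supported model, so \(\sm{P} \subseteq \supp{P}\), and that \(\supp{P} = \fix{f}\) by Theorem~\ref{theorem:supported-fixed-point}. Hence it suffices to prove that \(f\) has no fixed point, i.e. \(\fix{f} = \emptyset\); the conclusion \(\sm{P} = \emptyset\) then follows immediately.

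First I would pin down the shape of the network. Since \(\dg{P}\) is a negative cycle, every vertex has in-degree exactly one, so each \(f_v\) depends on a single variable, namely its unique in-neighbour \(u\) in \(\dg{P}\), and is constant-free (exactly as argued in the proof of Proposition~\ref{proposition:positive-cycle}). Consequently \(f_v\) reduces to the literal \(u\) when the arc \((uv,\oplus)\) is in \(\dg{P}\) and to \(\neg u\) when the arc \((uv,\ominus)\) is in \(\dg{P}\), and \(\ig{f}\) coincides with \(\dg{P}\), i.e. \(\ig{f}\) is itself a negative cycle.

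The core step is then to show that a Boolean network whose influence graph is a single negative cycle admits no fixed point. I would either invoke the companion of the result of~\cite{remy2003description} already used for the positive case, or give the short direct argument: writing the cycle as \(v_1 \to \cdots \to v_n \to v_1\), a fixed point \(x\) must satisfy \(x_{v_{i+1}} = x_{v_i}\) across each positive arc and \(x_{v_{i+1}} = 1 - x_{v_i}\) across each negative arc; propagating these equalities once around the cycle expresses \(x_{v_1}\) as itself composed with an odd number of negations, forcing \(x_{v_1} = 1 - x_{v_1}\), which is impossible over \(\mathbb{B}\). Thus \(\fix{f} = \emptyset\), and combining with the two inclusions above gives \(\sm{P} = \emptyset\).

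I expect the main obstacle to lie in the second and third steps rather than in the final bookkeeping: one must be sure that the in-degree-one condition genuinely forces each \(f_v\) to be a single non-constant literal (the constant-free claim), since a stray fact would collapse \(f_v\) to a constant and break the identification \(\ig{f} = \dg{P}\); and one must establish the no-fixed-point property of negative cycles, which is the substantive ingredient borrowed from the Boolean-network literature. Both are parity and structure facts about cycles and, once the network is reduced to literals, amount to the single sign computation around the loop.
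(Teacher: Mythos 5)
Your proposal is correct and follows essentially the same route as the paper: encode \(P\) as a BN \(f\), observe that the in-degree-one structure makes every \(f_v\) a constant-free literal so that \(\ig{f}\) is a negative cycle, conclude via~\cite{remy2003description} that \(f\) has no fixed point, and finish with \(\sm{P} \subseteq \supp{P} = \fix{f} = \emptyset\). The only difference is that you also sketch a self-contained parity argument for the no-fixed-point property of a negative cycle, which the paper simply cites.
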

\begin{proof}
	Let \(f\) be the BN encoding of \(P\).
	Since \(\dg{P}\) is a negative cycle, the in-degree of each vertex is one, leading to every \(f_v\) is constant-free.
	Hence, \(\ig{f}\) is also a negative cycle.
	By~\cite{remy2003description}, \(f\) has no fixed point.
	It follows that \(P\) has no supported model.
	Since a stable model is also a supported model, \(P\) has no stable model.
\end{proof}

\begin{example} Let \(P_1\) be the LP: \(a \leftarrow \dng{b}; b \leftarrow \dng{c}; c \leftarrow a\).
	Let \(P_2\) be the LP: \(a \leftarrow b; b \leftarrow \dng{c}; c \leftarrow a\).
	Figures~\ref{fig:pos-neg-cycle}(a) and~\ref{fig:pos-neg-cycle}(b) show the DGs of \(P_1\) and \(P_2\), respectively.
	\(\dg{P_1}\) is a positive cycle and it has two negative arcs.
	\(P_1\) has two stable models (\(A_1 = \{a, c\}, A_2 = \{b\}\)), which is consistent with Proposition~\ref{proposition:positive-cycle}.
	We also see that \(A_1 \cap A_2 = \emptyset\) and \(A_1 \cup A_2 = \{a, b, c\} = \atom{P_1}\).
	\(\dg{P_2}\) is a negative cycle and \(P_2\) has no stable model, which is consistent with Proposition~\ref{proposition:negative-cycle}.
	\label{example:pos-neg-cycle}
\end{example}

\begin{figure}[ht]
	\centering
	\begin{subfigure}[b]{0.45\textwidth}
		\centering
		\begin{tikzpicture}[node distance=1cm and 1cm, every node/.style={scale=1.0}]
		\node[circle, draw] (a) [] {$a$};
		\node[circle, draw] (b) [right=of a, xshift=0cm] {$b$};
		\node[circle, draw] (c) [right=of b, xshift=0cm] {$c$};
		
		\draw[->] (b) edge [] node [midway, below, fill=white] {$\ominus$} (a);
		\draw[->] (c) edge [] node [midway, below, fill=white] {$\ominus$} (b);
		\draw[->] (a) edge [bend left=25] node [midway, above, fill=white] {$\oplus$} (c);
		\end{tikzpicture}
		\caption{}
	\end{subfigure}%
	\begin{subfigure}[b]{0.45\textwidth}
		\centering
		\begin{tikzpicture}[node distance=1cm and 1cm, every node/.style={scale=1.0}]
		\node[circle, draw] (a) [] {$a$};
		\node[circle, draw] (b) [right=of a, xshift=0cm] {$b$};
		\node[circle, draw] (c) [right=of b, xshift=0cm] {$c$};
		
		\draw[->] (b) edge [] node [midway, below, fill=white] {$\oplus$} (a);
		\draw[->] (c) edge [] node [midway, below, fill=white] {$\ominus$} (b);
		\draw[->] (a) edge [bend left=25] node [midway, above, fill=white] {$\oplus$} (c);
		\end{tikzpicture}
		\caption{}
	\end{subfigure}%
	\caption{(a) \(\dg{P_1}\). (b) \(\dg{P_2}\).}
	\label{fig:pos-neg-cycle}
\end{figure}
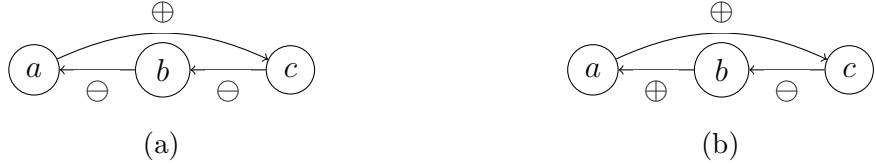

Next, we investigate more deeply the effect of positive and negative cycles of the DG of \(P\) on its stable models.
Recall that a Positive Feedback Vertex Set (PFVS) of a signed directed graph \(G\) is a set of vertices intersecting all positive cycles of \(G\)~\cite{richard2019positive}.

\begin{lemma}
	Let \(P\) be an LP.
	If \(P\) has two distinct stable models (say \(A_1\) and \(A_2\)), then \(\dg{P}\) has a positive cycle \(C^+\) such that for every \(v \in C^+\), either \(v \in A_1\) or \(v \in A_2\).
	\label{lemma:distinct-stable-model-positive-cycle}
\end{lemma}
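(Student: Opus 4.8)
The plan is to move the statement into the Boolean-network world, use the classical fact that two distinct fixed points force a positive cycle (in a localized form), and then transport that cycle back to $\dg{P}$. Let $f$ be the BN encoding of $P$. Since every stable model is a supported model and $\supp{P} = \fix{f}$ by Theorem~\ref{theorem:supported-fixed-point}, the distinct stable models $A_1, A_2$ are two distinct fixed points of $f$. Put $\Delta = A_1 \triangle A_2 \neq \emptyset$; every $v \in \Delta$ lies in exactly one of $A_1, A_2$, so $\Delta \subseteq A_1 \cup A_2$. By Theorem~\ref{theorem:ig_dg} we have $\ig{f} \subseteq \dg{P}$, so it suffices to exhibit a positive cycle of $\ig{f}$ all of whose vertices lie in $\Delta$: it is then automatically a positive cycle $C^+$ of $\dg{P}$ with every vertex in $A_1 \cup A_2$.

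To obtain such a localized cycle I would first normalize the two fixed points by the conjugation that negates coordinate $v$ exactly when $v \in A_1$. This sends $A_1$ to the all-zero state $\mathbf{0}$ and $A_2$ to the state $\chi_\Delta$ active precisely on $\Delta$, so the conjugated network $g$ has $\mathbf{0}$ and $\chi_\Delta$ as two distinct fixed points still disagreeing exactly on $\Delta$. The key bookkeeping is that negating an input reverses the sign of every arc leaving it while negating an output reverses the sign of every arc entering it; thus each arc $(uv)$ acquires the factor $\sigma_u \sigma_v$ in $\ig{g}$, where $\sigma_w = -1$ iff $w \in A_1$. Around any cycle each vertex appears once as a head and once as a tail, so these factors telescope to $1$: both the vertex set and the sign of every cycle are preserved between $\ig{f}$ and $\ig{g}$, and it is enough to find the positive cycle for $g$.

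With $g$ fixing $\mathbf{0}$ and $\chi_\Delta$, I would traverse a monotone Hamming path from $\mathbf{0}$ to $\chi_\Delta$ that switches the coordinates of $\Delta$ from $0$ to $1$ one at a time. For each $v \in \Delta$ we have $g_v(\mathbf{0}) = 0$ and $g_v(\chi_\Delta) = 1$, so along the path $g_v$ makes at least one $0 \to 1$ jump; the coordinate $u \in \Delta$ switched at that jump witnesses $g_v(z[u \leftarrow 0]) < g_v(z[u \leftarrow 1])$, hence a positive arc $(uv, \oplus)$ of $\ig{g}$. Picking one such $u =: \phi(v) \in \Delta$ for every $v \in \Delta$ and iterating $\phi$ on the finite set $\Delta$ must revisit a vertex; the arcs traversed are all positive, so their product is positive, yielding a positive cycle of $\ig{g}$, and hence of $\ig{f}$, contained in $\Delta$. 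Transporting it through $\ig{f} \subseteq \dg{P}$ gives the required positive cycle $C^+$ of $\dg{P}$ with all vertices in $\Delta \subseteq A_1 \cup A_2$.

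The delicate step is the sign control: merely finding, for each differing coordinate, some coordinate that influences it does not by itself produce a \emph{positive} cycle. The conjugation is what both makes every relevant arc positive (reducing to the $\mathbf{0}/\chi_\Delta$ case) and certifies that cycle parity is conjugation-invariant, so verifying the $\sigma_u\sigma_v$ factor and its telescoping is the one place that needs genuine care. Alternatively, this localized positive-cycle fact can be taken off the shelf from the Boolean-network literature (e.g.\ \cite{richard2019positive,aracena2008maximum}), collapsing the middle steps into a single citation.
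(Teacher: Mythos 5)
Your proposal is correct and follows essentially the same route as the paper: pass to the BN encoding, identify \(A_1\) and \(A_2\) with two distinct fixed points via Theorem~\ref{theorem:supported-fixed-point}, invoke the localized positive-cycle property of the influence graph, and pull the cycle back to \(\dg{P}\) via Theorem~\ref{theorem:ig_dg}. The only divergence is that the paper takes the localized positive-cycle fact off the shelf as Lemma~1 of~\cite{richard2019positive}, whereas you re-derive it (correctly) via conjugation and a monotone Hamming path---exactly the collapse into a single citation that you anticipate in your closing remark.
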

\begin{proof}
	Let \(f\) be the BN encoding of \(P\).
	\(A_1\) and \(A_2\) are also distinct supported models of \(P\).
	By Theorem~\ref{theorem:supported-fixed-point}, \(A_1\) and \(A_2\) are distinct fixed points of \(f\).
	By Lemma 1\footnote{If \(f\) has two fixed points \(x\) and \(y\), then \(\ig{f}\) has a positive cycle \(C^+\) such that \(x_v \neq y_v, \forall v \in C^+\).} of~\cite{richard2019positive}, \(\ig{f}\) has a positive cycle \(C^+\) such that for every \(v \in C^+\), \(A_1(v) \neq A_2(v)\), i.e., either \(v \in A_1\) or \(v \in A_2\) because \(A_1(v), A_2(v) \in \mathbb{B}\).
	\(C^+\) is also a positive cycle of \(\dg{P}\) because \(\ig{f}\) is a subgraph of \(\dg{P}\).
\end{proof}

\begin{lemma}
	If a sign directed graph \(G\) is strongly connected and has no negative cycle or has no positive cycle, then \(G\) is sign-definite.
	\label{lemma:sign-definite-graph-scc}
\end{lemma}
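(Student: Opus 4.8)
The plan is to argue by contraposition: I would assume that $G$ is strongly connected but \emph{not} sign-definite, and then show that $G$ must contain \emph{both} a positive cycle and a negative cycle. This contradicts the hypothesis, which asserts that at least one of the two cycle types is absent, and therefore forces $G$ to be sign-definite.

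First I would unpack the failure of sign-definiteness. By the definition given earlier, if $G$ is not sign-definite then there exist two \emph{distinct} vertices $u$ and $v$ carrying two arcs of opposite signs between them, i.e.\ both $(uv, \oplus)$ and $(uv, \ominus)$ lie in $E$. Next I would exploit strong connectivity: since there is a directed path from $v$ back to $u$, I pick a \emph{shortest} such path $\pi$ from $v$ to $u$. Closing $\pi$ with the positive arc $(uv, \oplus)$ yields a cycle $C_1$, and closing the \emph{same} path $\pi$ with the negative arc $(uv, \ominus)$ yields a cycle $C_2$.

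The crux is the sign bookkeeping. Let $k$ be the number of negative arcs on $\pi$. Then $C_1$ contains exactly $k$ negative arcs while $C_2$ contains exactly $k+1$, so the two cycles have opposite parity; by the definition of cycle sign, exactly one of $C_1, C_2$ is positive and the other is negative. Hence $G$ simultaneously possesses a positive cycle and a negative cycle, contradicting the assumption that $G$ has no negative cycle or has no positive cycle. This contradiction establishes the claim.

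I expect the only delicate point to be justifying that the two closed walks $C_1$ and $C_2$ are genuine cycles (elementary), so that the even/odd classification of arc signs is actually applicable, rather than mere closed walks that might revisit vertices. Taking $\pi$ to be a \emph{shortest} $v$-to-$u$ path handles this cleanly: a shortest path is simple and cannot pass through $u$ at any internal vertex (otherwise it could be shortened), so prepending the single arc $u \to v$ keeps the resulting walk simple. Once this is in place, the remainder is routine parity arithmetic on the number of negative arcs.
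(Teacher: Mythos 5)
Your proof is correct and follows essentially the same route as the paper's: use strong connectivity to obtain a path from \(v\) back to \(u\), close it with each of the two oppositely signed parallel arcs, and observe that the two resulting cycles have opposite sign, contradicting the hypothesis. Your explicit care about taking a shortest (hence simple) path so that the closed walks are genuine cycles is a small refinement the paper glosses over, but the argument is the same.
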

\begin{proof}
	We first prove that each arc of \(G\) belongs to a cycle in \(G\) (*).
	Taken an arbitrary arc \((uv, \epsilon)\) in \(G\).
	Since \(G\) is strongly connected, there is a directed path from \(v\) to \(u\).
	By adding \((uv, \epsilon)\) to this path, we obtain a cycle.
	
	Assume that \(G\) is not sign-definite.
	Then there are two arcs: \((uv, \oplus)\) and \((uv, \ominus)\).
	By (*), \((uv, \oplus)\) (resp. \((uv, \ominus)\)) belongs to a cycle in \(G\) (say \(C\)).
	\(C\) is a positive (resp. negative) cycle because \(G\) has no negative (resp. positive) cycle.
	Then \((C - (uv, \oplus)) + (uv, \ominus)\) (resp. \((C - (uv, \ominus)) + (uv, \oplus)\)) is a negative (resp. positive) cycle in \(G\).
	This implies a contradiction.
	Hence, \(G\) is sign-definite.
\end{proof}

\begin{theorem}
	Let \(P\) be an LP.
	If \(\dg{P}\) has no positive cycle, then \(P\) has at most one stable model.
	In addition, if \(P\) has no fact and every atom appears in the head of a rule in \(P\), then \(P\) has no stable model.
	\label{theorem:no-positive-cycle}
\end{theorem}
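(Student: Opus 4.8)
The plan is to handle the two assertions separately. For the first, I would argue by contraposition straight from Lemma~\ref{lemma:distinct-stable-model-positive-cycle}: if $P$ had two distinct stable models, that lemma would exhibit a positive cycle in $\dg{P}$, contradicting the hypothesis. Hence $\sm{P}$ has at most one element.

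For the second assertion, I would first reduce it to a pure Boolean-network question. Every cycle of $\pdg{P}$ uses only positive arcs and is therefore a positive cycle of $\dg{P}$, so the hypothesis forces $\pdg{P}$ to be acyclic; Fages' theorem (Theorem~\ref{theo:fages}) then gives $\sm{P} = \supp{P}$, and Theorem~\ref{theorem:supported-fixed-point} gives $\supp{P} = \fix{f}$ for the BN encoding $f$. Thus it suffices to prove that $f$ has no fixed point. The two extra hypotheses are exactly what rule out a trivial fixed point: since every atom is the head of some rule and no rule is a fact, each atom has at least one incoming arc in $\dg{P}$ (any rule contributes an arc from a nonempty-body atom to its head). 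So $\dg{P}$ has minimum in-degree at least one and therefore contains a cycle, which must be negative.

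Next I would localize the obstruction. Take a source strongly connected component $S$ of $\dg{P}$ (a source of the condensation). Having no incoming arc from outside, $S$ is self-contained: every rule whose head lies in $S$ has its whole body inside $S$. Moreover $S$ is non-trivial: a single-atom source component cannot be arc-free, because its unique atom is the head of a nonempty-body rule living inside $S$, which forces a self-loop; that self-loop is negative (a positive one would be a positive cycle), so either $S$ is a single atom with $f_v=\neg v$, or $S$ has at least two atoms and hence a negative cycle. By Lemma~\ref{lemma:sign-definite-graph-scc}, the restriction of $\dg{P}$ to $S$ is sign-definite. Since $S$ is self-contained, the restriction $f^{S}$ of the encoded BN is a genuine BN on the variables $S$, and any fixed point of $f$ restricts to a fixed point of $f^{S}$; so it is enough to show that $f^{S}$ has no fixed point.

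The heart of the argument, and the step I expect to be the main obstacle, is precisely this last claim: a strongly connected, sign-definite network all of whose cycles are negative, in which every variable is a head and no rule is a fact, admits no fixed point. By the first assertion applied to the self-contained sub-program on $S$, $f^{S}$ has at most one fixed point, so only a single candidate must be excluded. I would try to derive a contradiction by tracing the fixed-point equations around a negative cycle of $S$: sign-definiteness (Lemma~\ref{lemma:sign-definite-graph-scc}) pins down the monotonicity direction of each arc, so traversing a cycle carrying an odd number of negative arcs should force some variable to equal its own negation, exactly as in the single-cycle situation behind Proposition~\ref{proposition:negative-cycle}. The delicate point is controlling the \emph{other} inputs of each local function along the cycle, since each $f_v$ is a disjunction of several conjunctions rather than the single cycle edge; making this rigorous is where I expect the real work to lie, and it is plausibly cleanest to invoke the Boolean-network principle that a purely negative, strongly connected feedback structure has no stable state, generalizing the single negative cycle result used for Proposition~\ref{proposition:negative-cycle}.
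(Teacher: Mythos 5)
Your treatment of the first assertion is correct and identical to the paper's: contraposition via Lemma~\ref{lemma:distinct-stable-model-positive-cycle}. For the second assertion your setup also tracks the paper's proof closely --- pass to a source strongly connected component $S$ of $\dg{P}$, note that it is self-contained so the encoding restricts to a genuine BN $f^{S}$, observe that every vertex of $S$ has an in-neighbour, and invoke Lemma~\ref{lemma:sign-definite-graph-scc} to get sign-definiteness. But you stop exactly at the step that carries the mathematical content and leave it unproven. The claim you need is that $f^{S}$ has no fixed point, and the correct route is the cited BN result (Theorem~2 of Aracena et al.): if $\ig{f^{S}}$ has no positive cycle \emph{and has minimum in-degree at least one}, then $f^{S}$ has no fixed point. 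The gap is that you only establish minimum in-degree at least one for the \emph{dependence} graph restricted to $S$, not for the \emph{influence} graph of $f^{S}$; these can differ, since $\ig{f}\subseteq\dg{P}$ may be strict (in Example~\ref{example:stable-supported}, $f_a=b\lor\neg b$ is constant, so $a$ has in-degree $2$ in $\dg{P}$ but $0$ in $\ig{f}$). Closing that gap is precisely what sign-definiteness is for, and your proposal spends it on something else (fixing ``monotonicity directions'' in a cycle-tracing argument you concede you cannot complete).

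The missing step runs as follows. By sign-definiteness the positive and negative in-neighbour sets of each $v\in S$ are disjoint, and their union is nonempty because $P$ has no fact and every atom heads a rule whose body lies in $S$. Assigning $0$ to all positive in-neighbours and $1$ to all negative ones falsifies at least one literal in every conjunction of $f^{S}_v$, giving $f^{S}_v=0$; the symmetric assignment gives $f^{S}_v=1$. Hence no $f^{S}_v$ is constant, so every vertex of $\ig{f^{S}}$ has in-degree at least one, the Aracena theorem applies, $f^{S}$ and therefore $f$ have no fixed point, and $P$ has no supported --- a fortiori no stable --- model. Your fallback of tracing the fixed-point equations around a single negative cycle fails for exactly the reason you flag (the side inputs of each local function); the ``Boolean-network principle'' you would need to invoke is the Aracena theorem itself, whose hypothesis is on $\ig{f^{S}}$ and is what you have not verified. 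A minor further remark: the detour through Fages' theorem is unnecessary for this half, since every stable model is supported and showing $\fix{f}=\emptyset$ already suffices.
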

\begin{proof}
	Assume that \(P\) has at least two stable models.
	Then \(\dg{P}\) has a positive cycle by Lemma~\ref{lemma:distinct-stable-model-positive-cycle}, which is a contradiction.
	Hence, \(P\) has at most one stable model.
	
	Since \(P\) no fact and every atom appears in the head of a rule in \(P\), each vertex of \(\dg{P}\) has at least one input vertex.
	Let \(G^*\) be a strongly connected subgraph of \(\dg{P}\) such that there is no arc from a vertex outside \(G^*\) to a vertex in \(G^*\).
	Since \(\dg{P}\) has no positive cycle, then \(G^*\) also has no positive cycle.
	By Lemma~\ref{lemma:sign-definite-graph-scc}, \(G^*\) is sign-definite.
	Let \(f\) be the BN encoding of \(P\).
	Since there is no arc from a vertex outside \(G^*\) to a vertex in \(G^*\), there is a BN \(f^*\) on the set of variables that are vertices of \(G^*\) and \(f^*_v = f_v, \forall v \in \var{f^*}\).
	
	Since each vertex of \(\dg{P}\) has at least one input vertex, each vertex of \(G^*\) also has at least one input vertex.
	Let \(v\) be an arbitrary vertex in \(G^*\).
	Since \(G^*\) is sign-definite, the sets of positive and negative input vertices of \(v\) are disjoint.
	By assigning 0 to all positive input vertices of \(v\) in \(G^*\) and 1 to all negative input vertices of \(v\) in \(G^*\), we have \(f^*_v = 0\).
	Symmetrically, assigning 1 to all positive input vertices of \(v\) in \(G^*\) and 0 to all negative input vertices of \(v\) in \(G^*\), we have \(f^*_v = 1\).
	Hence, \(f^*_v\) cannot be constant.
	It implies that the minimum in-degree of \(ig(f^*)\) is at least one.
	Clearly, \(ig(f^*)\) is a subgraph of \(G^*\), leading to \(ig(f^*)\) has no positive cycle.
	Then \(f^*\) has no fixed point by Theorem 2\footnote{If \(\ig{f}\) has no positive cycle and its minimum in-degree is at least one, then \(f\) has no fixed point.} of~\cite{aracena2004positive}.
	Let \(\gamma(f)\) and \(\gamma(f^*)\) be the SAT formulas characterizing fixed points of \(f\) and \(f^*\), respectively.
	Since \(f^*\) has no fixed point, \(\gamma(f^*)\) has no model.
	Clearly, \(\gamma(f^*)\) is part of \(\gamma(f)\), thus \(\gamma(f)\) has no model.
	Hence, \(f\) has no fixed point.
	It follows that \(P\) has no supported model.
	Since a stable model is also a supported model, \(P\) has no stable model.
\end{proof}

\begin{proposition}
	Let \(P\) be an LP.
	For any PFVS \(U^{+}\)  of \(\dg{P}\), we have that \(|\sm{P}| \leq 2^{|U^{+}|}\).
	\label{proposition:ubound-pfvs}
\end{proposition}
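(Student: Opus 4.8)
The plan is to bound the number of stable models directly by a pigeonhole argument on their restrictions to \(U^{+}\), using Lemma~\ref{lemma:distinct-stable-model-positive-cycle} as the crucial ingredient. Suppose, for contradiction, that \(|\sm{P}| > 2^{|U^{+}|}\). Each stable model \(A\) is a subset of \(\atom{P}\), and I would record only its restriction to \(U^{+}\), namely the set \(A \cap U^{+}\); there are exactly \(2^{|U^{+}|}\) possible such restrictions. Since we assumed strictly more stable models than that, the pigeonhole principle yields two distinct stable models \(A_1 \neq A_2\) that agree on \(U^{+}\), i.e. \(A_1 \cap U^{+} = A_2 \cap U^{+}\).

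Next I would invoke Lemma~\ref{lemma:distinct-stable-model-positive-cycle}. Since \(A_1\) and \(A_2\) are two distinct stable models of \(P\), the lemma supplies a positive cycle \(C^{+}\) of \(\dg{P}\) on which \(A_1\) and \(A_2\) disagree at every vertex (the lemma gives \(A_1(v) \neq A_2(v)\) for all \(v \in C^{+}\), i.e. each \(v \in C^{+}\) lies in exactly one of \(A_1, A_2\)). Combining the two facts yields the crux: on \(U^{+}\) the two models agree, while on \(C^{+}\) they disagree everywhere, so no vertex of \(C^{+}\) can lie in \(U^{+}\), whence \(C^{+} \cap U^{+} = \emptyset\).

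This is the contradiction I am after. The cycle \(C^{+}\) is a positive cycle of \(\dg{P}\) that \(U^{+}\) misses entirely, which contradicts the defining property of a PFVS, namely that it must intersect every positive cycle of \(\dg{P}\). Therefore \(|\sm{P}| \leq 2^{|U^{+}|}\).

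The one delicate point, and the step I expect to carry the weight of the argument, is the vertex-wise disagreement on \(C^{+}\): the contradiction hinges on \(A_1\) and \(A_2\) differing at \emph{every} vertex of the cycle (not merely somewhere along it), which is precisely the strengthened conclusion provided by Lemma~\ref{lemma:distinct-stable-model-positive-cycle} (ultimately the footnoted fixed-point lemma of \cite{richard2019positive}). As an alternative route, one could pass through Boolean networks: letting \(f\) be the BN encoding of \(P\), Theorem~\ref{theorem:ig_dg} gives \(\ig{f} \subseteq \dg{P}\), so every positive cycle of \(\ig{f}\) is a positive cycle of \(\dg{P}\) and hence \(U^{+}\) is also a PFVS of \(\ig{f}\); the known fixed-point bound \(|\fix{f}| \leq 2^{|U^{+}|}\) for a PFVS of the influence graph, together with \(\sm{P} \subseteq \supp{P} = \fix{f}\) from Theorem~\ref{theorem:supported-fixed-point}, then gives the same inequality. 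I would favor the direct pigeonhole version, as it stays self-contained within the results already established in this section.
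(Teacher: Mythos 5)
Your proof is correct, but your preferred (pigeonhole) route is genuinely different from the one the paper takes. The paper's proof is your ``alternative route'': it transfers the PFVS from \(\dg{P}\) to \(\ig{f}\) via Theorem~\ref{theorem:ig_dg}, invokes the known bound \(|\fix{f}| \leq 2^{|U^{+}|}\) (Theorem 9 of Aracena's work on fixed points) as a black box, and concludes from \(|\sm{P}| \leq |\fix{f}|\). Your primary argument instead replays, on the LP side, the standard proof of that fixed-point bound: assume more than \(2^{|U^{+}|}\) stable models, extract by pigeonhole two distinct ones agreeing on \(U^{+}\), and use the \emph{vertex-wise} disagreement along the positive cycle supplied by Lemma~\ref{lemma:distinct-stable-model-positive-cycle} to produce a positive cycle disjoint from \(U^{+}\), contradicting the PFVS property. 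You correctly identify that the whole weight rests on the cycle disagreeing at every vertex, not just somewhere, and the lemma as stated (exclusive ``either \(v \in A_1\) or \(v \in A_2\)'') does deliver that. What each approach buys: yours is self-contained within the results already proved in the section and makes the mechanism transparent; the paper's is shorter but imports an external theorem. The paper also sketches, in the remark following the proposition, a third route via Theorem~\ref{theorem:no-positive-cycle} (fixing an assignment on \(U^{+}\) and observing each residual program has at most one stable model), which is close in spirit to your pigeonhole but organized as a case split rather than a contradiction.
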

\begin{proof}
	Let \(f\) be the BN encoding of \(P\).
	Since \(U^{+}\) is a PFVS of \(\dg{P}\), \(\dg{P} - U^{+}\) has no positive cycle.
	Since \(\ig{f} \subseteq \dg{P}\), \(\ig{f} - U^{+}\) also has no positive cycle.
	Therefore, \(U^{+}\) is a PFVS of \(\ig{f}\).
	By Theorem 9 of~\cite{aracena2008maximum}, \(|\fix{f}| \leq 2^{|U^{+}|}\).
	Since \(|\sm{P}| \leq |\fix{f}|\), we can conclude that \(|\sm{P}| \leq 2^{|U^{+}|}\).
\end{proof}

Note that Proposition~\ref{proposition:ubound-pfvs} can be directly proved by using Theorem~\ref{theorem:no-positive-cycle}.
The main idea is that in any stable model of \(P\), each atom in \(U^{+}\) can be either true or false.
For each assignment of atoms in \(U^{+}\), we get a new LP whose DG has no positive cycle, then it has at most one stable model by Theorem~\ref{theorem:no-positive-cycle}.
There are \(2^{|U^{+}|}\) possible assignments of atoms in \(U^{+}\), hence we can deduce \(|\sm{P}| \leq 2^{|U^{+}|}\).

It is well-known in the BN field that if \(\ig{f}\) has no negative cycle, then \(f\) has at least one fixed point (Theorem 6 of~\cite{aracena2008maximum}).
Naturally, it is interesting to question whether an LP whose DG has no negative cycle has at least one stable model or not.
Fages showed a counterexample for the case of infinite logic programs~\cite{cois1994consistency}, but the case of finite logic programs is still open.
We answer this question by Theorem~\ref{theorem:neg-model-existence}.

To prove Theorem~\ref{theorem:neg-model-existence}, we use the fixpoint semantics of logic programs~\cite{dung1989fixpoint}.
To be self-contained, we briefly recall the definition of the least fixpoint of a logic program.
A \emph{quasi-interpretation} is a set possibly infinite of rules of the form \(p \leftarrow \dng{p_1}, \dots, \dng{p_k}\) where \(k \geq 0\) and \(p, p_1, \dots, p_k \in \atom{P}\).
Let \(r\) be the rule \(p \leftarrow \dng{p_1}, \dots, \dng{p_k}, q_1, \dots, q_j\) and let \(r_i\) be rules \(q_i \leftarrow \dng{q^1_i}, \dots, \dng{q^{l_i}_i}\) where \(1 \leq i \leq j\) and \(l_i \geq 0\).
Then \(T_{r}(\{r_1, \dots, r_j\})\) is the rule \(p \leftarrow \dng{p_1}, \dots, \dng{p_k}, \dng{q_1^1}, \dots, \dng{q_1^{l_1}}, \dots, \dng{q_j^1}, \dots, \dng{q_j^{l_j}}.\) We now introduce the transformation \(T_P\) on quasi-interpretations: \(T_P(Q) = \{T_r(\{r_1, \dots, r_j\}) | r \in P, r_i \in Q, 1 \leq i \leq j\}.\) Let \(\text{lfp}_i = T_P^i(\emptyset) = T_P(T_P(\dots T_P(\emptyset)))\), then \(\text{lfp} = \bigcup_{i \geq 1}\text{lfp}_i\) is the \emph{least fixpoint} of \(P\).

\begin{lemma}
	\label{lemma:fixpoint-neg}
	Let \(P\) be an LP and \(P'\) be its least fixpoint.
	If \(\dg{P}\) has no negative cycle, then \(\dg{P'}\) has no negative cycle.
\end{lemma}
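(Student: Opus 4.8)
The plan is to exploit the very special shape of the rules that make up the least fixpoint \(P'\). Every rule of \(P'\) is a quasi-interpretation rule of the form \(p \leftarrow \dng{s_1}, \dots, \dng{s_m}\), so it has an empty positive body; consequently \emph{every} arc of \(\dg{P'}\) is negative. In a signed directed graph whose arcs are all negative, a directed cycle is negative exactly when its length is odd. Hence the statement reduces to showing that, under the hypothesis, \(\dg{P'}\) contains no directed cycle of odd length. I would argue this by contradiction, so assume such an odd directed cycle exists in \(\dg{P'}\).

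The key step is a structural characterisation of the arcs of \(\dg{P'}\) in terms of \(\dg{P}\). I would prove, by induction on \(i\), the following claim: if \(p \leftarrow \dng{s_1}, \dots, \dng{s_m}\) belongs to \(\text{lfp}_i\), then for every \(s_l\) there is a directed path in \(\dg{P}\) from \(s_l\) to \(p\) that uses exactly one negative arc and otherwise only positive arcs (a \emph{negative path}). The base case \(i = 1\) is immediate, since \(\text{lfp}_1\) consists of the rules of \(P\) with empty positive body, so \((s_l\,p, \ominus)\) is already a negative arc of \(\dg{P}\). For the inductive step I would unwind the definition of \(T_r\): a rule of \(\text{lfp}_i\) has the form \(T_r(\{r_1, \dots, r_j\})\) with \(r \in P\) of head \(p\) and positive body \(\{q_1, \dots, q_j\}\), and with each \(r_i \in \text{lfp}_{i-1}\) of head \(q_i\). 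The negative body of the unfolded rule is the union of \(\nbody{r}\) with the negative bodies of the \(r_i\). If \(s_l \in \nbody{r}\) then \((s_l\,p, \ominus)\) is again a negative arc of \(\dg{P}\); if \(s_l\) comes from some \(r_i\), the induction hypothesis gives a negative path from \(s_l\) to \(q_i\), and appending the positive arc \((q_i\,p, \oplus)\) (which exists because \(q_i \in \pbody{r}\) and \(\head{r} = p\)) yields a negative path from \(s_l\) to \(p\). Taking the union over all iterations establishes the claim for \(P'\).

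With this characterisation in hand, I would replace each arc of the assumed odd cycle of \(\dg{P'}\) by the corresponding negative path in \(\dg{P}\). Concatenating these paths produces a closed walk \(W\) in \(\dg{P}\), and its total number of negative arcs equals the number of arcs of the original cycle (one per path), which is odd. Finally, any closed walk in a directed graph decomposes into simple directed cycles with the arc multiset preserved, so the numbers of negative arcs of these cycles sum to the (odd) number of negative arcs of \(W\); hence at least one of them is a negative cycle of \(\dg{P}\), contradicting the hypothesis.

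I expect the main obstacle to be the inductive characterisation of the arcs of \(\dg{P'}\): one must carefully track that the unfolding preserves the head of each rule and that each substitution step contributes exactly one positive arc, so that the single negative arc per path is never duplicated or cancelled. The concluding closed-walk decomposition is standard, and the initial observation that \(\dg{P'}\) has only negative arcs is what makes the parity argument go through.
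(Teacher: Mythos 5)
Your argument is correct, but it is genuinely more self-contained than the paper's. The paper disposes of this lemma in one line by invoking Fages' Lemma~5.3, which already asserts that a negative path in \(\dg{P'}\) lifts to a negative path in \(\dg{P}\), and then (somewhat tersely) concludes that negative cycles lift as well. You instead re-derive the lifting property from scratch: you observe that every arc of \(\dg{P'}\) is negative because \(P'\) consists of quasi-interpretation rules with empty positive bodies, you prove by induction on the iteration index \(i\) that each such arc \((s_l\,p,\ominus)\) comes from a walk in \(\dg{P}\) with exactly one negative arc (the unfolding contributes one negative arc from \(\nbody{r}\) or from the inductive hypothesis, plus only positive arcs \((q_i\,p,\oplus)\)), and you then close the argument with the standard decomposition of a closed walk into simple cycles to preserve the odd parity of negative arcs. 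This buys two things over the paper's version: the proof does not depend on an external lemma, and it makes explicit the walk-to-cycle step that the paper's citation glosses over. One small imprecision: the object you construct in the inductive step is a \emph{walk}, not necessarily a simple path (shortcutting a repeated vertex could in principle delete the unique negative arc), so the claim should be phrased for walks throughout; since your final step concatenates them into a closed walk and only then extracts cycles, this does not affect the correctness of the argument, but the wording should be adjusted.
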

\begin{proof}
	By Lemma 5.3 of~\cite{cois1994consistency}, if vertex \(a\) has a negative path to vertex \(b\) in \(\dg{P'}\), then \(a\) has also a negative path to \(b\) in \(\dg{P}\).
	Note that the set of vertices of \(\dg{P'}\) is a subset of that of \(\dg{P}\).
	It follows that if there is a negative cycle in \(\dg{P'}\), then there is also a negative cycle in \(\dg{P}\).
	Hence, if \(\dg{P}\) has no negative cycle, then \(\dg{P'}\) has no negative cycle.
\end{proof}

\begin{theorem}
	\label{theorem:neg-model-existence}
	Let \(P\) be an LP.
	If \(\dg{P}\) has no negative cycle, then \(P\) has at least one stable model.
\end{theorem}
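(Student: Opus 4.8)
The plan is to leverage the Boolean-network analogue of the statement---if $\ig{f}$ has no negative cycle then $f$ has at least one fixed point (Theorem 6 of~\cite{aracena2008maximum})---but to notice immediately that this alone is insufficient. Combined with Theorem~\ref{theorem:ig_dg} and Theorem~\ref{theorem:supported-fixed-point}, it would only give $\supp{P} \neq \emptyset$, i.e. a \emph{supported} model, and a supported model need not be stable. Nor can Fages' theorem be applied to $P$ directly, since $\pdg{P}$ may well contain cycles. To bridge this gap I would pass from $P$ to its least fixpoint $P'$, which by construction consists entirely of rules with empty positive body, and then run the Boolean-network argument on $P'$ instead.

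First I would record the reduction to $P'$. By the fixpoint semantics of~\cite{dung1989fixpoint}, $P$ and $P'$ have the same stable models, so it suffices to prove $\sm{P'} \neq \emptyset$. A key preliminary point is that $P'$ is \emph{finite}: every rule produced by the transformation $T_P$ keeps its head and (negated) body atoms inside the finite set $\atom{P}$, so there are only finitely many distinct such rules and the increasing union defining $\text{lfp}$ stabilizes. Since each rule of $P'$ has empty positive body, $\pdg{P'}$ has no arc at all, hence no cycle; Fages' theorem (Theorem~\ref{theo:fages}) then yields $\sm{P'} = \supp{P'}$, so the goal reduces to exhibiting a supported model of $P'$.

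Next I would invoke the Boolean-network machinery on $P'$. By Lemma~\ref{lemma:fixpoint-neg}, the hypothesis that $\dg{P}$ has no negative cycle propagates to $\dg{P'}$. Letting $f'$ be the BN encoding of $P'$, Theorem~\ref{theorem:ig_dg} gives $\ig{f'} \subseteq \dg{P'}$, so $\ig{f'}$ likewise has no negative cycle, and Theorem 6 of~\cite{aracena2008maximum} guarantees $\fix{f'} \neq \emptyset$. Theorem~\ref{theorem:supported-fixed-point} identifies $\fix{f'}$ with $\supp{P'}$, and chaining the established equalities yields $\sm{P} = \sm{P'} = \supp{P'} = \fix{f'} \neq \emptyset$, which is the claim.

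The hard part will be justifying the two facts underpinning the reduction: that $\sm{P} = \sm{P'}$ and that $P'$ is finite. These are exactly where the restriction to \emph{finite} programs is essential. They also explain why the theorem is consistent with Fages' infinite counterexample~\cite{cois1994consistency}: for an infinite $P$ the least fixpoint may be genuinely infinite, so the finite BN encoding $f'$ and Theorem 6 of~\cite{aracena2008maximum} no longer apply, and the whole chain of equalities collapses.
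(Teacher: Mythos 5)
Your proposal is correct and follows essentially the same route as the paper: pass to the least fixpoint $P'$, use Lemma~\ref{lemma:fixpoint-neg} to transfer the absence of negative cycles, apply Theorem 6 of~\cite{aracena2008maximum} to the encoding $f'$, and identify $\fix{f'}$ with $\supp{P'}$ via Theorem~\ref{theorem:supported-fixed-point}. The only (harmless) divergence is that you obtain $\sm{P}=\supp{P'}$ in two steps, via $\sm{P}=\sm{P'}$ plus Fages' theorem applied to the purely negative program $P'$, whereas the paper cites this equality directly from Theorem 4 of~\cite{dung1989fixpoint}; your explicit remark that $P'$ is finite is a welcome point of rigor the paper leaves implicit.
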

\begin{proof}
	Let \(P'\) be the least fixpoint of \(P\).
	By Lemma~\ref{lemma:fixpoint-neg}, \(\dg{P'}\) has no negative cycle.
	Let \(f'\) be the encoded BN of \(P'\).
	Since \(\ig{f'} \subseteq \dg{P'}\), \(\ig{f'}\) also has no negative cycle.
	Therefore, \(f'\) has at least one fixed point by Theorem 6 of~\cite{aracena2008maximum}.
	By Theorem~\ref{theorem:supported-fixed-point}, \(\supp{P'} = \fix{f'}\).
	It is known that \(\sm{P} = \supp{P'}\) (Theorem 4 of~\cite{dung1989fixpoint}).
	This implies that \(\sm{P} = \fix{f'}\).
	Hence, \(P\) has at least one stable model.
\end{proof}

The technique used to prove Theorem~\ref{theorem:neg-model-existence} suggests a potential way to explore more theoretical results: picking up a structural property of \(\dg{P}\), seeing what it becomes in the DG of the least fixpoint of \(P\), and checking how fixed points of the encoded BN of the least fixpoint behave under the new property.
For illustration, consider the class of locally stratified LPs.
\(P\) is \emph{locally stratified} if every cycle of \(\dg{P}\) contains no negative arc~\cite{gelfond1988stable}.
Considering this property, we can prove that the DG of the least fixpoint of \(P\) has no cycle.
In this case, the least fixpoint has exactly one stable model, leading to \(P\) has so.
See the detailed proof in Theorem~\ref{theorem:locally-stratified-lp}.
This provides an alternative (maybe simpler) proof for the well-known result stating that a locally stratified LP has a unique stable model (Theorem 6.2.7 of~\cite{hitzler2011mathematical}).

\begin{theorem}
	\label{theorem:locally-stratified-lp}
	Let \(P\) be an LP.
	If \(P\) is locally stratified, then \(P\) has exactly one stable model.
\end{theorem}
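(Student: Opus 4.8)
The plan is to follow the recipe sketched just before the statement: pass to the least fixpoint $P'$ of $P$, show that its dependence graph $\dg{P'}$ is acyclic, and then read off the unique stable model through the encoded Boolean network, exactly as in the proof of Theorem~\ref{theorem:neg-model-existence}.

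First I would record the structural fact that makes the least fixpoint convenient. By construction $P'$ is a quasi-interpretation, i.e.\ every rule of $P'$ has the form $p \leftarrow \dng{p_1}, \dots, \dng{p_k}$ with empty positive body. Consequently $\dg{P'}$ contains only negative arcs, so any cycle of $\dg{P'}$ (were one to exist) would consist entirely of negative arcs and in particular would contain a negative arc.

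The heart of the argument is the claim that $\dg{P'}$ has no cycle at all. Here I would stress that local stratification is strictly stronger than ``no negative cycle'': since all arcs of $\dg{P'}$ are negative, an even-length cycle is a \emph{positive} cycle, so Lemma~\ref{lemma:fixpoint-neg} alone does not suffice and I must use the full strength of local stratification, namely that no negative arc of $\dg{P}$ lies on any cycle. I would argue by contradiction. Suppose $C\colon v_0 \to v_1 \to \dots \to v_0$ is a cycle of $\dg{P'}$. Its first arc $(v_0v_1, \ominus)$ is a negative path of length one in $\dg{P'}$, so by Lemma 5.3 of~\cite{cois1994consistency} there is a negative path $\pi$ from $v_0$ to $v_1$ in $\dg{P}$, and $\pi$ carries at least one negative arc $(ab, \ominus)$. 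Translating each of the remaining arcs of $C$ through Lemma 5.3 as well yields a walk from $v_1$ back to $v_0$ in $\dg{P}$, which concatenated with $\pi$ gives a closed walk $W$ of $\dg{P}$ passing through $(ab, \ominus)$. Following $W$ from $b$ around to $a$ extracts a simple path from $b$ to $a$ in $\dg{P}$, and appending $(ab,\ominus)$ produces a cycle of $\dg{P}$ containing a negative arc, contradicting local stratification. Hence $\dg{P'}$ is acyclic.

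Finally I would close as in Theorem~\ref{theorem:neg-model-existence}. Let $f'$ be the BN encoding of $P'$. Since $\ig{f'} \subseteq \dg{P'}$ and the latter is acyclic, $\ig{f'}$ is acyclic, so $f'$ has a unique fixed point (Theorem 1 of~\cite{richard2019positive}). Combining $\fix{f'} = \supp{P'}$ (Theorem~\ref{theorem:supported-fixed-point}) with $\sm{P} = \supp{P'}$ (Theorem 4 of~\cite{dung1989fixpoint}) gives $\sm{P} = \fix{f'}$, a singleton, so $P$ has exactly one stable model. I expect the main obstacle to be the cycle-lifting step of the third paragraph: Lemma 5.3 speaks only of negative \emph{paths} between two fixed vertices, so one must carefully convert a cycle of the all-negative graph $\dg{P'}$ into a genuine simple cycle of $\dg{P}$ that is \emph{guaranteed} to carry a negative arc, rather than merely a closed walk. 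The walk-to-path extraction, together with the observation that it is precisely the presence of a negative arc on a cycle (not the sign of the cycle) that local stratification forbids, is the delicate point.
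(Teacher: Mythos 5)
Your proof is correct, and while it follows the same overall skeleton as the paper's (pass to the least fixpoint \(P'\), show \(\dg{P'}\) is acyclic, then conclude uniqueness via the encoded BN and Theorem 4 of~\cite{dung1989fixpoint}), the way you establish the central acyclicity claim is genuinely different. The paper proves that \(\dg{\text{lfp}_i}\) is acyclic by induction on the stage \(i\): for each newly produced rule with head \(b\) and negative body atom \(a\), it rules out a path from \(b\) back to \(a\) in \(\dg{\text{lfp}_i}\) by a case analysis on the sign of that path, each case being pushed back to \(\dg{P}\) through Lemma 5.3 of~\cite{cois1994consistency}. You instead observe up front that \(P'\) is a quasi-interpretation, so \(\dg{P'}\) is all-negative, and lift a hypothetical cycle of \(\dg{P'}\) arc by arc through Lemma 5.3 to a closed walk of \(\dg{P}\) that necessarily traverses some negative arc \((ab,\ominus)\); the walk-to-simple-path extraction from \(b\) to \(a\) then yields a genuine cycle of \(\dg{P}\) carrying a negative arc, contradicting local stratification. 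This avoids both the induction and the sign case analysis, and you are right that the full strength of local stratification (no negative arc on \emph{any} cycle, not merely no negative cycle) is what must be used; the paper relies on the same distinction implicitly. The only point worth adding explicitly is the degenerate case \(a=b\), where \((aa,\ominus)\) is already a forbidden cycle; otherwise the extraction step is exactly as delicate as you flag it to be, and it goes through. Your closing step (acyclic \(\ig{f'}\) gives a unique fixed point by Theorem 1 of~\cite{richard2019positive}, combined with \(\fix{f'}=\supp{P'}=\sm{P}\)) is just an inlined version of the paper's appeal to Proposition~\ref{proposition:acyclic-program}.
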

\begin{proof}
	Let \(P'\) be the least fixpoint of \(P\).
	Recall that \(P' = \text{lfp} = \bigcup_{i \geq 1}\text{lfp}_i\).
	We show that \(\dg{\text{lfp}_i}\) has no cycle by induction on \(i\).
	
	The base case is trivial (i.e., \(i = 0\)).
	Consider an arbitrary rule \(r \in \text{lfp}_{i + 1} \setminus \text{lfp}_i\).
	Let \(b = \head{r}\) and \(a \in \nbody{r}\).
	By definition, there exists a rule \(r' \in P\) such that \(\head{r'} = b\) and \(a \in \nbody{r''}\) with \(r'' \in \text{lfp}_i\) and \(\head{r''} \in \pbody{r'}\).
	Let \(c = \head{r''}\).
	We use \(b \leq_{\oplus} a\) (resp. \(b \leq_{\ominus} a\)) to denote that there is a positive (resp. negative) path from \(b\) to \(a\).
	We have that \(a \leq_{\ominus} b\) in \(\dg{\text{lfp}_{i + 1}}\).
	Note that \(\text{lfp}_i \subseteq \text{lfp}_{i + 1}\) by definition.
	If \(b \leq_{\oplus} a\) in \(\dg{\text{lfp}_i}\), then \(b \leq_{\ominus} a\) in \(\dg{\text{lfp}_{i + 1}}\), leading to \(b \leq_{\ominus} b\) in \(\dg{\text{lfp}_{i + 1}}\) (also in \(\dg{P'}\)), which is a contradiction because \(\dg{P'}\) has no negative cycle by Lemma 5.3 of~\cite{cois1994consistency}.
	If \(b \leq_{\ominus} a\) in \(\dg{\text{lfp}_i}\), then \(b \leq_{\ominus} a\) in \(\dg{P'}\), leading to \(b \leq_{\ominus} a\) in \(\dg{P}\) by Lemma 5.3 of~\cite{cois1994consistency}.
	Similarly, \(a \leq_{\ominus} c\) in \(\dg{P}\), since \(a \leq_{\ominus} c\) in \(\dg{\text{lfp}_i}\).
	We use \(b \leq_{0} a\) to denote that there is a path of only positive arcs from \(b\) to \(a\).
	We have \(c \leq_{0} b\) in \(\dg{P}\).
	Then there is a cycle containing \(b\) and also negative arcs in \(\dg{P}\), which is a contradiction because \(P\) is locally stratified.
	Hence, there is no path from \(b\) to \(a\) in \(\dg{\text{lfp}_i}\).
	Since \(\dg{\text{lfp}_i}\) has no cycle by the induction hypothesis and adding the rule \(r\) to \(\text{lfp}_i\) does not introduce new cycles, \(\dg{\text{lfp}_{i + 1}}\) has no cycle.
	
	It follows that \(\dg{P'}\) has no cycle, then \(P'\) has exactly one stable model by Proposition~\ref{proposition:acyclic-program}.
	Since \(\sm{P} = \sm{P'}\) by Theorem 4 of~\cite{dung1989fixpoint}, \(P\) also has exactly one stable model.
\end{proof}

Finally, inspired by Theorem~\ref{theorem:neg-model-existence}, we explore an interesting result shown in Theorem~\ref{theorem:one-scc-two-stable-models}, which is actually the generalization of Proposition~\ref{proposition:positive-cycle}.

\begin{theorem}
	Let \(P\) be an LP.
	Suppose that \(\dg{P}\) is strongly connected, has at least one arc, and has no negative cycle.
	If \(\pdg{P}\) has no cycle, then \(P\) has two stable models \(A\) and \(B\) such that \(\forall v \in \atom{P}\), either \(v \in A\) or \(v \in B\).
	In addition, \(A\) and \(B\) can be computed in polynomial time.
	\label{theorem:one-scc-two-stable-models}
\end{theorem}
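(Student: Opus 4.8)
The plan is to use the hypothesis ``no negative cycle'' to extract a two-colouring of the atoms that directly yields two complementary supported models, and then to invoke Fages' theorem to turn them into stable models. First I would record the structural consequences of the hypotheses. Since \(\dg{P}\) is strongly connected and has no negative cycle, Lemma~\ref{lemma:sign-definite-graph-scc} gives that \(\dg{P}\) is sign-definite, so between any two atoms there is at most one arc and its sign is well defined; moreover every cycle of \(\dg{P}\) is positive (the graph is \emph{balanced}). I would also observe that \(\dg{P}\) must contain a negative arc: being strongly connected with at least one arc it contains a cycle, that cycle is positive but cannot lie in the acyclic \(\pdg{P}\), so it uses a negative arc. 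Finally \(|\atom{P}| \ge 2\) (a single atom with an arc would be a self-loop, hence a positive or negative cycle), and by strong connectivity every atom is the head of at least one rule.

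Next I would build the colouring. Fix a root \(r\), set \(\sigma(r)=0\), and for every atom \(v\) let \(\sigma(v)\in\{0,1\}\) be the parity of the number of negative arcs along any directed path from \(r\) to \(v\) (one exists by strong connectivity). This is well defined: two paths from \(r\) to \(v\) of different parity, each closed up by a return path to \(r\), would produce a closed walk with an odd number of negative arcs, which decomposes into simple cycles and so would contain a negative cycle, contradicting the hypothesis. By construction a positive arc joins atoms of equal colour and a negative arc joins atoms of opposite colour, and sign-definiteness rules out any conflict. I then set \(A=\sigma^{-1}(1)\) and \(B=\sigma^{-1}(0)\). The negative arc found above forces \(\sigma\) to take both values, so \(A\) and \(B\) are non-empty, \(A \neq B\), and \(A \cup B = \atom{P}\) with \(A \cap B = \emptyset\).

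The heart of the argument is to show that \(A\) and \(B\) are fixed points of the BN encoding \(f\). Take \(A\) and an atom \(v\). If \(\sigma(v)=1\), then for every rule \(r\) with head \(v\) each positive-body atom is coloured \(1\) (so in \(A\)) and each negative-body atom is coloured \(0\) (so outside \(A\)), whence \(\bodyf{r}(A)=1\); as \(v\) heads at least one rule, \(f_v(A)=1=A_v\). If \(\sigma(v)=0\), then each rule with head \(v\) has a body literal falsified by \(A\) (a positive-body atom coloured \(0\), or a negative-body atom coloured \(1\)), so \(\bodyf{r}(A)=0\) and \(f_v(A)=0=A_v\). The symmetric computation gives the same for \(B\). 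Hence \(A,B\in\fix{f}=\supp{P}\) by Theorem~\ref{theorem:supported-fixed-point}, and since \(\pdg{P}\) is acyclic, Theorem~\ref{theo:fages} promotes them to \(A,B\in\sm{P}\). The colouring \(\sigma\) is computed by a single graph traversal that propagates the colour along each arc (unchanged on positive arcs, flipped on negative arcs), so \(A\) and \(B\) are obtained in polynomial time.

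The step I expect to be the main obstacle is the fixed-point verification in the case where \(v\) lies \emph{outside} the model being tested (the computation giving \(f_v=0\)): it silently uses that every rule with head \(v\) has a non-empty body, i.e.\ that no atom is defined by a fact. A fact with head \(w\) makes \(f_w\) the constant \(1\), forcing \(w\) into every supported model and thereby destroying the complementary partition; one can check that such a fact can even collapse the two models into one. So the clean argument above needs the mild additional assumption that \(P\) has no facts (equivalently, every \(f_v\) is non-constant), in the spirit of the ``no fact'' hypothesis already used in Theorem~\ref{theorem:no-positive-cycle}. Under that assumption the two colour classes are genuine stable models, and I would expect the author's proof either to make this restriction explicit or to route the construction through the least fixpoint of \(P\) as suggested after Theorem~\ref{theorem:neg-model-existence}.
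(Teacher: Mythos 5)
Your construction is essentially the paper's own: the paper likewise derives sign-definiteness from Lemma~\ref{lemma:sign-definite-graph-scc}, splits \(\atom{P}\) into two classes \(S^+\) and \(S^-\) with positive arcs inside a class and negative arcs between classes (citing a structure theorem of Akutsu et al.\ where you build the partition explicitly by path parity --- the content is identical), and verifies that the two classes are complementary fixed points of the BN encoding before applying Theorems~\ref{theorem:supported-fixed-point} and~\ref{theo:fages}. So methodologically you have reproduced the intended argument, and the polynomial-time claim is handled the same way.

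The obstacle you flag in your last paragraph is, however, a genuine gap, and it is not resolved in the paper either: the paper's proof claims that ``\(f\) has no constant function'' by ``the deduction similar to that in Theorem~\ref{theorem:no-positive-cycle}'', but that deduction uses the hypothesis --- present in Theorem~\ref{theorem:no-positive-cycle} and absent here --- that \(P\) has no fact. Strong connectivity plus at least one arc does force every vertex of \(\dg{P}\) to have in-degree at least one, but a fact contributes no arcs, so an atom can have in-degree one and still be the head of a fact, making its function constantly \(1\). Concretely, take \(P = \{a \leftarrow \dng{b};\ b \leftarrow \dng{a};\ a \leftarrow\}\): \(\dg{P}\) is the strongly connected two-cycle \(a \to b \to a\) with two negative arcs (hence a positive cycle, so no negative cycle), \(\pdg{P}\) has no arcs, yet \(f_a = \neg b \lor 1 = 1\) and the unique supported and stable model is \(\{a\}\) --- no complementary pair exists. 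So the theorem is false as stated, and the ``no fact'' restriction you propose (or any equivalent condition ensuring no \(f_v\) is constant) must be added; with it, your argument and the paper's both go through.
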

\begin{proof}
	Let \(f\) be the BN encoding of \(P\).
	Since \(\pdg{P}\) has no cycle, \(\sm{P} = \supp{P} = \fix{f}\) by Theorems~\ref{theo:fages} and~\ref{theorem:supported-fixed-point}.
	We show that \(f\) has two fixed points that are complementary.
	
	Since \(\dg{P}\) is strongly connected and has no negative cycle, it is sign-definite by Lemma~\ref{lemma:sign-definite-graph-scc}.
	It implies that \(\ig{f}\) is also sign-definite because \(\ig{f} \subseteq \dg{P}\).
	\(\dg{P}\) has the minimum in-degree of at least one because it is strongly connected and has at least one arc.
	By using the deduction similar to that in Theorem~\ref{theorem:no-positive-cycle}, we have that \(f\) has no constant function.
	It is known that when \(\dg{P}\) is strongly connected and has no negative cycle, its set of vertices can be divided into two equivalence classes (say \(S^+\) and \(S^-\)) such that any two vertices in \(S^+\) (resp. \(S^-\)) are connected by either no arc or a positive arc, and there is either no arc or a negative arc between two vertices in \(S^+\) and \(S^-\) (Theorem 1 of~\cite{akutsu2012singleton}).
	Since \(\ig{f} \subseteq \dg{P}\) and \(\ig{f}\) has the same set of vertices with \(\dg{P}\), \(S^+\) and \(S^-\) are still such two equivalence classes in \(\ig{f}\).
	
	Let \(x\) be a state defined as: \(x_i = 1\) if \(i \in S^+\) and \(x_i = 0\) if \(i \in S^-\).
	Consider a given node \(j\).
	If \(x_j = 0\), by the above result, for all \(i \in \atom{P}\) such that \(\ig{f}\) has a positive arc from \(i\) to \(j\), \(x_i = 1\), and for all \(i \in \atom{P}\) such that \(\ig{f}\) has a negative arc from \(i\) to \(j\), \(x_i = 0\).
	Since \(f_j\) cannot be constant, \(f_j(x) = 1\).
	Analogously, if \(x_j = 1\), then \(f_j(x) = 1\), implying that \(x\) is a fixed point of \(f\).
	By using the similar deduction, we can conclude that \(\overline{x}\) is also a fixed point of \(f\) where \(\overline{x}_i = 1 - x_i, \forall i \in \var{f}\).
	Let \(A\) and \(B\) are two models of \(P\) corresponding to \(x\) and \(\overline{x}\).
	Clearly, \(A\) and \(B\) are stable models of \(P\).
	We have that \(\forall v \in \atom{P}\), either \(v \in A\) or \(v \in B\).
	In addition, since \(S^+\) and \(S^-\) can be computed in polynomial time, \(A\) and \(B\) can be computed in polynomial time.
\end{proof}

\begin{example} Consider an LP \(P\): \(a \leftarrow \dng{b}, b \leftarrow \dng{a}, b \leftarrow \dng{c}, c \leftarrow \dng{b}\).
	The encoded BN \(f\) of \(P\) is: \(f_a = \neg b, f_b = \neg a \lor \neg c, f_c = \neg b\).
	\dg{P} is strongly connected, has at least one arc, and has no negative cycle.
	Since \(\pdg{P}\) has no cycle, \(P\) has two stable models.
	Actually, \(P\) has two stable models: \(A = \{b\}\) and \(B = \{a, c\}\).
	We easily see that \(A \cap B = \emptyset\) and \(A \cup B = \atom{P}\).
	\label{example:two-stable-models}
\end{example}

\section{Discussion}
\label{sec:Discussion}

\subsection{Computation of stable models}
Theorem~\ref{theorem:supported-fixed-point} and the subsequent results immediately suggest that we can compute stable models of an LP \(P\) by using fixed points of its encoded BN \(f\).
Recall that a fixed point of \(f\) may be not a stable model of \(P\), but checking whether a fixed point is a stable model or not can be done in linear time~\cite{eiter2009answer}.
Notably, there is a rich history for computing fixed points in BNs: SAT-based methods~\cite{mori2022attractor}, ILP-based methods~\cite{mori2022attractor}, and structure-based methods~\cite{aracena2021finding}.
We can first check if \(\pdg{P}\) does not contain any cycle.
If so, \(\sm{P} = \fix{f}\) and we can apply directly some efficient methods for fixed point computation.
Otherwise, we need to check if each fixed point is a stable model or not.
This approach would be efficient if \(|\fix{f} \setminus \sm{P}|\) is not too large.
Note however that we can estimate \(|\fix{f}|\) in prior by using some known upper bounds of the number of fixed points of a BN (e.g., \(2^{|U^{+}|}\) where \(U^{+}\) is a PFVS of \(\dg{P}\), which can be efficiently computed by using some approximation methods~\cite{richard2019positive} because \(U^{+}\) is unnecessarily minimum).
If the chosen bound is too large, we can apply some transformations (e.g.,~\cite{janhunen2011compact}) to \(P\) to get a new LP \(P'\) such that \(\sm{P} = \sm{P'} = \fix{f'}\) where \(f'\) is the encoded BN of \(P'\).
Such a transformation might introduce more atoms, but it seems that there are less changes in cycle structures (e.g., the size of the minimum PFVS seems to still retain~\cite{janhunen2011compact}).
In this case, structure-based methods for fixed point computation, especially the PFVS-based method~\cite{aracena2021finding} whose complexity is \(O(2^{|U^{+}|} \times n^{2 + k})\) where the Boolean functions of the BN can be evaluated in time \(O(n^k)\), would be very helpful in complementary to standard ASP solvers such as Cmodels, Smodels, and clasp.

\subsection{Program correction} 
Theorem~\ref{theorem:neg-model-existence} has the potential to be useful in some other problems in ASP.
It is related to the program correction problem~\cite{janota2014minimal}, where we intend to modify an inconsistent LP \(P_{incon}\) (i.e., having no stable model) to make it consistent (i.e., having at least one stable model).
For example, by making some modification operators (e.g., removing rules, flipping signs of literals) such that \(\dg{P_{incon}}\) has no negative cycle, we can obtain the consistency for \(P_{incon}\).
These modifications might be not minimal, but they might provide a good upper bound for further searching.
In the opposite direction, we might want to modify a consistent LP \(P_{con}\) to make it inconsistent.
By Theorem~\ref{theorem:no-positive-cycle}, we can make modifications such that \(\dg{P_{con}}\) has no positive cycle.
Such a problem might be useful in program verification where the validation of a property can be encoded as the unsatisfiability of an SAT formula~\cite{luo2022proving} or the inconsistency of an LP.
We believe that exploiting Theorems~\ref{theorem:no-positive-cycle} and~\ref{theorem:neg-model-existence} will give more benefits to the program correction problem, which we plan to study deeply in future.

\subsection{Interplay between positive and negative cycles} 
The results presented in this paper show that positive and negative cycles are key structures to understand the relations between stable models and the DG of an LP.
However, they use only information on either positive cycles or negative cycles.
It is then natural to think that by using both kinds of cycles simultaneously, we can obtain more improved results.
Indeed, this approach for BNs has been thoroughly investigated with many insightful results~\cite{richard2018fixed,richard2019positive,richard2023attractor}.
For example, it has been shown that if the IG of a BN has negative cycles but these cycles are isolated by positive cycles, then the BN behaves as in the absence of negative cycles, i.e., it has at least one fixed point~\cite{richard2018fixed}.
Adapting these results for BNs to LPs seems to be promising as we are not aware of any similar studies in LPs.
However, this direction is non-trivial because the mix between positive and negative cycles might hinder the techniques that we propose in this paper.
Other techniques, maybe still being under the umbrella of the connection between LPs and BNs, might be needed.

\section{Conclusion and Future Work}
\label{sec:Conclusion}

Static analysis of ASP is important and has been proved very useful in both theory and practice of ASP.
In this work, we for the first time bridged between ASP and BNs, and used this connection to study the static analysis of ASP at that depth.
Specifically, we stated and proved several relations between positive and negative cycles in the DG of an LP and its stable models.
The most important obtained results include 1) the existence and the non-existence of some stable model under the non-existence of negative cycles and the non-existence of positive cycles, respectively, 2) an upper bound of the number of stable models based on PFVSs, and 3) the connection and techniques we established, which provide a powerful and unified framework for exploring and proving more new theoretical results in ASP.
In particular, we also demonstrated that the obtained results have the potential to be useful for solving two important problems in the field of ASP.

In the future, on the one hand, we shall explore more theoretical results following up the approach we established in this paper.
First, we shall continue to discover more results by considering how a structural property of the DG of an LP transforms in its least fixpoint.
Second, we plan to investigate how the interplay between positive and negative cycles of the DG affects the set of stable models.
Third, we shall generalize the obtained results for disjunctive logic programs, extended logic programs (i.e., containing strong negations), or logic programs with aggregates, constraints, and choice rules.
On the other hand, we plan to investigate thoroughly the applications of the obtained results.
First, we shall implement the proposed approach for stable model computation discussed in the previous section.
Then, we need to thoroughly test the efficiency of this approach on a large set of real-world problem instances, maybe obtained from the ASP competitions.
Second, we shall develop new methods for program correction that rely on the removal of positive and negative cycles, and then thoroughly test their efficiency.


\section*{Acknowledgments}

This work was supported by Institut Carnot STAR, Marseille, France.

\bibliographystyle{plain}
\bibliography{main}
\end{document}